\documentclass[journal]{IEEEtran}
\IEEEoverridecommandlockouts
\usepackage{setspace}
\usepackage{cite}
\usepackage{amsmath,amssymb,amsfonts,mathtools,bm}
\usepackage{amsthm}
\usepackage{algorithm}
\usepackage{algpseudocode}
\usepackage{graphicx}
\usepackage{textcomp}
\usepackage{xcolor,float}
\usepackage{tabularx}
\usepackage{textcomp}
\usepackage{diagbox}
\usepackage{svg}
\usepackage{booktabs}
\usepackage{subfigure}
\usepackage{hyperref}
\newtheorem{theorem}{Theorem}

\usepackage{graphicx}
\usepackage{makecell}
\usepackage{amsthm}
\usepackage{caption}
\usepackage{sidecap}
\usepackage{microtype}
\usepackage{booktabs} 
\usepackage{multirow}
\usepackage{float}
\usepackage{adjustbox} 
\usepackage{amsmath}
\usepackage{amssymb}
\usepackage{colortbl}
\usepackage{makecell}
\usepackage{float}
\usepackage{url}
\usepackage{balance}
\usepackage{placeins}
\usepackage{bbding}
\usepackage{hyperref}
\usepackage{graphicx}
\usepackage{sidecap}
\usepackage{microtype}
\usepackage{graphicx}
\usepackage{subfigure}
\usepackage{booktabs} 
\usepackage{multirow}
\usepackage{array}
\usepackage{subcaption}
\usepackage{graphicx}
\usepackage{makecell}
\usepackage{amsmath}
\usepackage{amssymb}
\newtheorem{proposition}{Proposition}
\usepackage{tikz}

\definecolor{lime}{HTML}{A6CE39}
\DeclareRobustCommand{\orcidicon}{%
    \begin{tikzpicture}
    \draw[lime, fill=lime] (0,0) 
    circle [radius=0.16] 
    node[white] {{\fontfamily{qag}\selectfont \tiny ID}};    \draw[white, fill=white] (-0.0625,0.095) 
    circle [radius=0.007];    \end{tikzpicture}
    \hspace{-2mm}}
\foreach \x in {A, ..., Z}{%
    \expandafter\xdef\csname orcid\x\endcsname{\noexpand\href{https://orcid.org/\csname orcidauthor\x\endcsname}{\noexpand\orcidicon}}
    }

\allowdisplaybreaks
\renewcommand{\baselinestretch}{1}

\hypersetup{hidelinks}

\begin{document}

\title{Learn for Variation: Efficient AAV Trajectory Learning through a Differentiable Wireless World Model}

\author{
Xiucheng Wang\orcidA{},~\IEEEmembership{Graduate Student Member,~IEEE,}
Zhenye Chen\orcidB{},
Nan Cheng\orcidC{},~\IEEEmembership{Senior Member,~IEEE,} Conghao Zhou\orcidF{},~\IEEEmembership{Member,~IEEE,}
Zhisheng Yin\orcidH{},~\IEEEmembership{Member,~IEEE,}
Xuemin (Sherman) Shen\orcidG{},~\IEEEmembership{Fellow,~IEEE}

\thanks{
\par This work was supported by the National Key Research and Development Program of China (2024YFB907500).
\par Xiucheng Wang, Nan Cheng, Conghao Zhou, and Zhisheng Yin are with the State Key Laboratory of ISN and School of Telecommunications Engineering, Xidian University, Xi’an 710071, China (e-mail: xcwang\_1@stu.xidian.edu.cn; dr.nan.cheng@ieee.org, conghao.zhou@ieee.org, zsyin@xidian.edu.cn);\textit{(Corresponding author: Nan Cheng.)}.
\par Zhenye Chen is with the School of Aerospace Science and Technology, Xidian University, Xi’an 710071, China (e-mail: 24149100028@stu.xidian.edu.cn)
\par Xuemin (Sherman) Shen is with the Department of Electrical and Computer Engineering, University of Waterloo, Waterloo, N2L 3G1, Canada (e-mail: sshen@uwaterloo.ca).

}

} 
    \maketitle

\IEEEdisplaynontitleabstractindextext

\IEEEpeerreviewmaketitle

\begin{abstract}
Autonomous aerial vehicles (AAVs) enable data collection for sixth-generation Internet-of-Things networks, but their trajectories couple nonlinear wireless rates with long-horizon service progress. This paper views the evolution of AAV kinematics, channel state, and user backlog as a structured differentiable world model and develops Learn for Variation (L4V) to exploit that model efficiently. L4V replaces a discontinuous completion-time objective with a cumulative-backlog surrogate, unrolls the mission dynamics, and propagates pathwise sensitivities to a neural policy through the discrete adjoint recursion. The resulting derivative is exact conditional on a fixed exogenous-noise realization; stochastic expected-objective optimization still requires sampling. We show that the structured adjoint grows at most polynomially with the horizon and establish a stationary-point rate for fixed-step full-gradient descent under standard smoothness assumptions. The framework also learns shared OFDMA allocation under reparameterized shadowing and Rician fading, while distributional pretraining amortizes model-based optimization into forward-only deployment on unseen layouts. Paired stress tests cover channel-generator mismatch, noisy partial observations, a fixed-resource two-AAV extension, and a circular no-fly region. Code and configurations are available at \url{https://github.com/UNIC-Lab/L4V-AAV}. Against genetic-algorithm, DQN, A2C, DDPG, and differentiable model-predictive-control implementations, L4V reduces mission time by up to $65\%$, executes a default mission in $53$ ms, and completes all $60$ frozen-policy tests after pretraining on $1{,}600$ layouts.
\end{abstract}

\begin{IEEEkeywords}
Autonomous aerial vehicles, trajectory optimization, structured world models, differentiable simulation, adjoint gradients, zero-shot generalization.
\end{IEEEkeywords}

\section{Introduction}
Autonomous aerial vehicles (AAVs) can provide reconfigurable connectivity for sixth-generation (6G) Internet-of-Things (IoT) data collection \cite{shen2023toward,wang2022joint}. Their motion changes the air-to-ground geometry and hence the service delivered to spatially separated users \cite{10764739,zhang2022challenges}. Communication-aware trajectory planning therefore selects a sequence of positions, rather than a single favorable location, to complete all user demands \cite{ding20203d,zeng2017energy}.

Three features make this sequential problem nontrivial. First, the rate changes nonlinearly with position \cite{zhou2022two}. Second, heterogeneous unfinished demands and a shared radio-resource budget couple the users at every slot. Third, mission completion is a first-hitting-time objective: an early move affects both current service and the geometry from which later users are approached \cite{zeng2016wireless,liu2019distributed}. A terminal completion-time reward can therefore provide a coarse learning signal, whereas differentiating the remaining backlog at every slot exposes how each action changes subsequent service.

Model-based trajectory design provides explicit constraints and interpretable optimization variables, but often requires an iterative solve \cite{liu2018energy,liu2019distributed,ding20203d}. Deep reinforcement learning (DRL) does not require environment derivatives, but can require many interactions and careful reward shaping \cite{wang2022joint,10623528}. World-model methods instead use a predictive dynamics model for planning or policy learning \cite{ha2018world,hafner2020dream}. In this paper, ``world model'' refers specifically to the structured transition that maps AAV motion, stochastic radio propagation, resource allocation, and residual demand to the next communication state; the model is physics- and communications-informed rather than learned from pixels. L4V exploits its differentiability to obtain pathwise derivatives \cite{heiden2021neuralsim}. The broad sweeps below optimize each learned policy for its sampled layout to compare optimization quality; a separate distributional-pretraining experiment freezes one policy and evaluates it without updates on disjoint held-out layouts.

To bridge this gap, this paper develops Learn for Variation (L4V), a gradient-informed trajectory learning framework that efficiently turns a structured communication world model into policy supervision. The key idea is to replace the discontinuous completion-time objective with a cumulative-backlog surrogate and to unroll the entire flight evolution as a differentiable computation graph. Backpropagation through time then evaluates the corresponding discrete adjoint recursion \cite{ioffe1997euler,kim2010optimal} and propagates pathwise sensitivities from the mission objective to the policy parameters. Conditional on a fixed channel-noise realization, these derivatives are exact and avoid policy-action sampling. Under stochastic training, they remain pathwise gradient samples rather than zero-variance gradients of the expected objective. L4V further incorporates temporal smoothness regularization and norm-based gradient clipping to improve trajectory regularity and numerical stability, and distributional pretraining amortizes repeated model-based optimization into a frozen forward policy. The main contributions of this paper are summarized as follows.
\begin{enumerate}
\item We formulate communication-aware AAV data collection as a structured world model whose explicit state transition jointly represents AAV kinematics, stochastic radio propagation, shared OFDMA allocation, and residual user demand. A cumulative-backlog objective converts mission completion into dense long-horizon supervision without discarding the physical structure.

\item We derive first-order necessary conditions through the Pontryagin minimum principle \cite{kim2010optimal} and show that backpropagation through the world model implements the same discrete adjoint recursion. This efficiently exposes how every motion and allocation action changes future service, yielding exact conditional pathwise derivatives for each fixed realization of the exogenous noise.

\item We propose L4V, which integrates the adjoint gradients into deterministic neural policy learning. Distributional pretraining amortizes this model-based optimization into forward-only zero-shot execution on disjoint layouts; temporal smoothness regularization and norm-based gradient clipping improve long-horizon training stability.

\item We conduct systematic simulations against representative baselines spanning evolutionary optimization, value-based reinforcement learning, actor-critic reinforcement learning, and gradient-based optimal control. In addition to broad parameter sweeps, paired stress tests evaluate channel-generator mismatch, noisy and partially observed state, a fixed-resource two-AAV extension, circular no-fly regions, the heading-smoothness coefficient, and frozen-policy transfer after large-scale layout pretraining.
\end{enumerate}

The remainder of this paper is organized as follows. Section~\ref{sec-2} reviews related work. Section~\ref{sec-3} presents the system model and problem formulation. Section~\ref{sec-4} develops the differentiable sequence optimization method and the associated theoretical analysis. Section~\ref{sec-5} details the L4V framework and its training procedure. Section~\ref{sec-6} reports simulation results and comparative evaluations. Section~\ref{sec-7} concludes the paper and discusses future directions.

\section{Related Work}\label{sec-2}
The deployment of AAVs in wireless networks introduces trajectory-design problems with nonlinear communication objectives and long decision horizons. Existing studies mainly use mathematical programming, graph-based planning, or reinforcement learning.

The dominant framework for offline trajectory planning relies on mathematical optimization. Because trajectory, transmit power, and user scheduling are coupled within the rate expression, block coordinate descent decomposes the joint problem into subproblems \cite{tseng2001convergence}. Successive convex approximation handles non-convex distance terms through local surrogates \cite{liu2019stochastic} and can support fairness objectives through epigraph reformulations \cite{yang2019inexact}. These methods offer interpretable constraints and convergence properties, but repeatedly solving quadratically constrained programs can impose substantial online latency \cite{5447068}. Their analytical models also require explicit approximations for obstacles and multipath.

A second line discretizes the AAV mission into graph decisions. Hover locations obtained from clustering or communication-disk intersections \cite{el2020improved} reduce planning to travelling-salesman \cite{xu2019brief} or vehicle-routing \cite{rojas2021unmanned} variants. Grid-based search can additionally enforce collision avoidance. This abstraction supports global routing, but it often suppresses physical-layer dynamics \cite{erdos2013experimental} and therefore requires a separate refinement stage for communication-aware motion control.

Deep reinforcement learning (DRL) instead frames trajectory design as a Markov decision process \cite{wang2022joint}. Algorithms such as DDPG \cite{silver2014deterministic} and twin delayed DDPG \cite{dankwa2019twin} learn continuous controls, while proximal policy optimization is frequently used in multi-agent settings \cite{schulman2017proximal}. Model-free training does not require environment derivatives, but it can require many interactions and careful reward shaping. Long missions further increase the variance of sampled policy-gradient estimates and complicate temporal credit assignment.

Learned world models compress observations into predictive latent dynamics and can support control through imagined rollouts \cite{ha2018world,hafner2020dream}. Differentiable simulators likewise expose model derivatives to gradient-based control and policy learning \cite{heiden2021neuralsim}. L4V therefore does not claim either the world-model concept or backpropagation through dynamics as a new algorithm. Its distinction is to make the wireless channel part of an explicit communication world model and connect shared-resource service dynamics, a cumulative-backlog surrogate, a discrete adjoint analysis, and a deployable neural trajectory policy. Unlike latent visual world models, L4V does not learn the transition law from raw observations; it assumes a known differentiable generator and tests mismatch directly. Distributional pretraining can move the resulting optimization cost offline; Section~\ref{sec-6j} evaluates the frozen policy on layouts disjoint from pretraining instead of inferring transfer from per-layout experiments.

\section{System Model and Problem Formulation}\label{sec-3}
We consider a wireless communication system where a single AAV collects data from $N$ ground users, $\mathcal{N}=\{1,\ldots,N\}$, at fixed horizontal coordinates $\mathbf{w}_i=[x_i,y_i]^T$. User $i$ has an initial normalized demand $D_i$. The mission is discretized into $T$ slots of duration $\tau$. Consistent with the released simulator, $v[t]\in[0,v_{\max}]$ denotes normalized horizontal displacement per slot; hence $\mathbf{q}[t+1]=\mathbf{q}[t]+v[t][\cos\theta[t],\sin\theta[t]]^T$. Physical speed can be recovered through a coordinate and time conversion when evaluating propulsion energy. The control $\mathbf{u}[t]=[v[t],\theta[t],\mathbf{a}[t]^T]^T$ also includes a resource-allocation vector $\mathbf{a}[t]=[a_1[t],\ldots,a_N[t]]^T$. A global softmax enforces $\sum_i a_i[t]=1$ and $a_i[t]\geq0$, so users compete for one fixed fleet-wide resource budget. The normalized service rate is
\begin{align}
    R_i[t] = N a_i[t]\ln\!\left(1+
    \frac{\eta\,\xi_i[t]}
    {\sigma^2\!\left[\max\!\left\{\|\mathbf{q}[t]-\mathbf{w}_i\|^2,\delta\right\}\right]^{\nu/2}}\right),
    \label{eq:rate_extended}
\end{align}
where $\eta$ is the unit-distance gain, $\sigma^2$ is normalized noise power, $\nu$ is the path-loss exponent, $\delta=0.1$ prevents the singularity at zero horizontal distance, and $\xi_i[t]\geq0$ is stochastic fading. The factor $N$ keeps the total resource fixed while making $a_i[t]=1/N$ recover the original per-user normalized rate exactly. Thus the model is an algorithmic test bed rather than a calibrated physical-layer link budget. The broad baseline sweeps use $a_i[t]=1/N$, $\xi_i[t]=1$, and $\nu=2$; the extended experiments use learned allocation and the fading term
\begin{align}
    \xi_i[t] = \exp\!\big(\kappa \epsilon_i[t]\big) \cdot \left| \sqrt{\tfrac{K_{\mathrm{r}}}{K_{\mathrm{r}}+1}} + \sqrt{\tfrac{1}{2(K_{\mathrm{r}}+1)}}\big(\epsilon_i^{\mathrm{r}}[t] + j\epsilon_i^{\mathrm{i}}[t]\big) \right|^2,
    \label{eq:fading}
\end{align}
where $\epsilon_i[t],\epsilon_i^{\mathrm r}[t],\epsilon_i^{\mathrm i}[t]\sim\mathcal{N}(0,1)$ are independent across users and slots, $\kappa\geq0$ is the log-shadowing standard deviation, and $K_{\mathrm r}\geq0$ is the Rician K-factor. Taking $K_{\mathrm r}\to\infty$ removes small-scale fading; the channel is deterministic only when $\kappa=0$ also disables shadowing. For a fixed draw, the realized rollout has an exact pathwise derivative; reparameterization preserves differentiability but does not remove uncertainty. Training assumes known kinematics and a known differentiable channel generator. Section~\ref{sec-6j} separately measures degradation when the deployment generator or state observations differ from training. The core derivation still assumes fixed users, continuous controls, and a differentiable rate; the multi-AAV and obstacle tests are structural extensions, not a claim of hardware realism. Let $T_i$ denote the slot at which user $i$ finishes. The optimization problem is
\begin{subequations}
\begin{align}
    (\text{P1}): \quad \min_{ \{\mathbf{u}[t]\} } \quad & \frac{1}{N} \sum_{i=1}^{N} T_i \\
    \text{s.t.} \quad & \sum_{t=1}^{T_i} R_i[t] \tau \ge D_i, \quad \forall i \in \mathcal{N} \\
    & 0 \le v[t] \le v_{\max}, \quad \forall t \\
    & \theta[t] \in [0, 2\pi), \quad \forall t \\
    & \sum_{i=1}^{N} a_i[t] = 1, \quad a_i[t] \ge 0, \quad \forall t \label{eq:p1_alloc}
\end{align}
\end{subequations}
where the trajectory $\mathbf{q}[t]$ embedded in $R_i[t]$ is determined by the control sequence $\{\mathbf{u}[t]\}$. Problem (P1) is nonconvex: position affects every user's rate over time, and the simplex allocation couples users through one time-varying resource budget. We therefore seek a policy that uses the known transition structure without solving a new long-horizon program at every deployment.

\section{World-Model-Guided Differential Sequence Optimization}\label{sec-4}
\begin{figure*}[!t]
    \centering
    \includegraphics[trim=0 0 0 25pt,clip,width=1\linewidth]{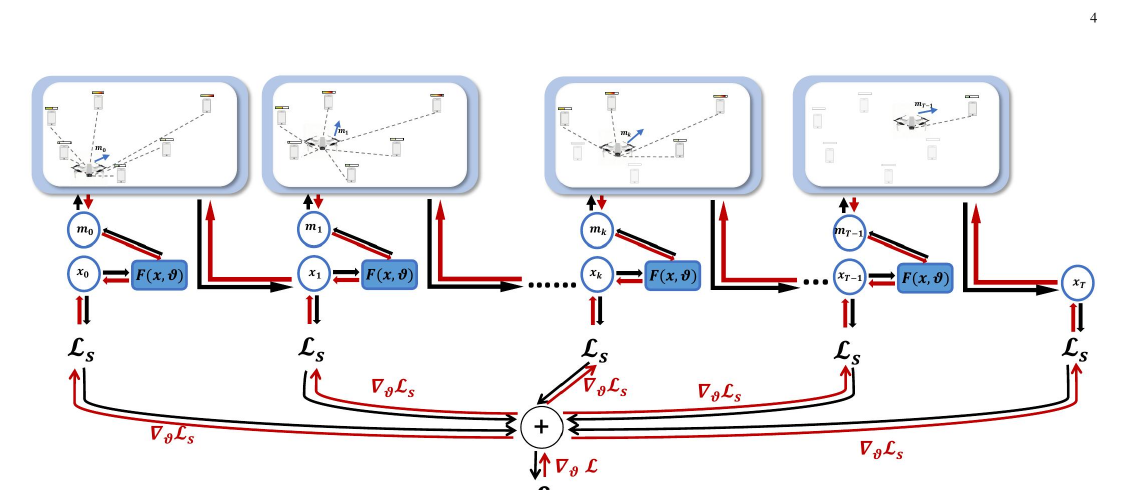}
    \caption{The L4V framework. Each blue transition block is one step of the structured wireless world model. The forward pass (black) unrolls the AAV trajectory and accumulates $\mathcal{L}$; the backward pass (red) implements the discrete adjoint and propagates exact conditional pathwise gradients $\nabla_\theta \mathcal{L}$ to the policy parameters.}
    \label{fig-system}
\end{figure*}

\subsection{Theoretical Analysis}
The transition equations in Section~\ref{sec-3} define a structured communication world model: given the current AAV location, residual demands, resource allocation, and reparameterized channel noise, they predict the next location and service state. Directly optimizing (P1) requires a long-horizon nonconvex solve. L4V instead unrolls this model, as illustrated in Fig.~\ref{fig-system}, and uses its derivatives through the adjoint method. To expose the structure of the resulting supervision, we first analyze a localized version of the problem.
Average completion time can be written as the area under the number of active users,
\begin{align}
    \frac{1}{N}\sum_{i=1}^{N}T_i
    =\frac{1}{N}\sum_{t=0}^{T-1}\sum_{i=1}^{N}\mathbf{1}\{d_i[t]>0\},
    \label{eq:completion_indicator}
\end{align}
provided the horizon covers every completion event. This exact expression is discontinuous. Defining the active set as $\mathcal{A}[t]=\{i\in\mathcal{N}:d_i[t]>0\}$, a natural greedy heuristic instead chooses $\mathbf{u}[t]$ to maximize the aggregate rate of the currently active users,
\begin{align}
    (\text{P2}): \quad \max_{\mathbf{u}[t]} \quad J(\mathbf{u}[t]) = \sum_{i \in \mathcal{A}[t]} R_i[t+1],
\end{align}
where $R_i[t+1]$ is the rate at the next position $\mathbf{q}[t+1]$, a direct function of $\mathbf{u}[t]$.
This greedy policy is myopic: a move that maximizes next-slot service can leave the AAV poorly positioned for later users. We instead optimize the full state sequence. The state is $\mathbf{x}[t]=[\mathbf{q}[t]^T,\mathbf{d}[t]^T]^T\in\mathbb{R}^{2+N}$, where $\mathbf{d}[t]=[d_1[t],\ldots,d_N[t]]^T$ is the remaining-demand vector. The control $\mathbf{u}[t]=[v[t],\theta[t],\mathbf{a}[t]^T]^T\in\mathbb{R}^{2+N}$ contains the per-slot displacement, heading, and simplex-constrained resource allocation. The transition $\mathbf{x}[t+1]=\mathbf{f}(\mathbf{x}[t],\mathbf{u}[t])$ consists of
\begin{align}
\mathbf{q}[t+1]&=\mathbf{q}[t]+v[t][\cos\theta[t],\sin\theta[t]]^T,\\
d_i[t+1]&=\max\{0,d_i[t]-R_i[t]\tau\}.
\end{align}
We use the stage cost $L(\mathbf{x}[t])=\sum_i d_i[t]$ and minimize its horizon sum:
\begin{align}
    (\text{P3}): \quad \min_{\{\mathbf{u}[t]\}_{t=0}^{T-1}} \quad & J = \sum_{t=0}^{T-1} L(\mathbf{x}[t]) = \sum_{t=0}^{T-1} \sum_{i=1}^{N} d_i[t] \\
    \text{s.t.} \quad & \mathbf{x}[t+1] = \mathbf{f}(\mathbf{x}[t], \mathbf{u}[t]), \quad \forall t \in [0, T-1] \nonumber \\
    & 0 \le v[t] \le v_{\max}, \quad \theta[t] \in [0, 2\pi), \quad \forall t \nonumber \\
    & \mathbf{a}[t]\succeq\mathbf{0},\quad \mathbf{1}^{T}\mathbf{a}[t]=1,\quad \forall t \nonumber\\
    & \mathbf{x}[0] = \mathbf{x}_0 = [\mathbf{q}_0^T, D_1, \dots, D_N]^T . \nonumber
\end{align}
\noindent Problem (P3) is a cumulative-backlog surrogate for (P1), not an equivalent reformulation. Equation~\eqref{eq:completion_indicator} weights every unfinished user equally, whereas (P3) weights it by the magnitude of its remaining data. The surrogate supplies a dense per-step signal and encourages early service, but two policies can have different backlog areas even when their completion times coincide, and minimizing (P3) does not guarantee that all users finish within a finite $T$. In the experiments, the horizon is capped at $500$ slots and success is declared when the total remaining task falls below $10^{-3}N$; capped runs are retained rather than silently discarded. This distinction is important when interpreting completion-time results.
For an independently parameterized control sequence, first-order necessary conditions for (P3) follow from the discrete-time Pontryagin Minimum Principle (PMP). Introducing the Hamiltonian makes the action sensitivities used below explicit:
\begin{align}
    \mathcal{H}(\mathbf{x}[t], \mathbf{u}[t], \boldsymbol{\lambda}[t+1]) = L(\mathbf{x}[t], \mathbf{u}[t]) + \boldsymbol{\lambda}[t+1]^T \mathbf{f}(\mathbf{x}[t], \mathbf{u}[t])
\end{align}
where $\boldsymbol{\lambda}[t] \in \mathbb{R}^{2+N}$ is the co-state (or adjoint) vector. It equals the sensitivity of the total cost to a state perturbation, $\boldsymbol{\lambda}[t] = \nabla_{\mathbf{x}[t]} J$, and follows the backward recursion
\begin{align}
    \boldsymbol{\lambda}[t] = \nabla_{\mathbf{x}[t]} \mathcal{H}(\mathbf{x}[t], \mathbf{u}[t], \boldsymbol{\lambda}[t+1])
\end{align}
with terminal condition $\boldsymbol{\lambda}[T]=\mathbf{0}$ because (P3) has no terminal cost. A necessary condition for an independently optimized control sequence $\{\mathbf{u}^*[t]\}$ is Hamiltonian minimization at every slot. In the interior of the control domain, this implies
\begin{align}
    \nabla_{\mathbf{u}[t]} \mathcal{H}(\mathbf{x}^*[t], \mathbf{u}^*[t], \boldsymbol{\lambda}^*[t+1]) = \mathbf{0}
    \label{eq:pmp_control}
\end{align}
When the displacement constraint is active, its component obeys the corresponding boundary condition, while the heading component remains unconstrained. The allocation component instead obeys the Karush--Kuhn--Tucker conditions for \eqref{eq:p1_alloc}: $\nabla_{a_i[t]}\mathcal{H}-\mu[t]\geq0$, with equality whenever $a_i^*[t]>0$. The softmax parameterization in Section~\ref{sec-5} enforces simplex feasibility during training. The Hamiltonian derivative gives the total sensitivity to an independently perturbed action and thus supplies structured, model-based temporal credit assignment. Under a stochastic channel, it is deterministic only conditional on the sampled exogenous-noise sequence; compared with a score-function estimator, it avoids the additional variance caused by sampling policy actions.
Inspired by this, we design a deterministic neural policy, $\pi_{\boldsymbol{\theta}}$, that directly parameterizes the control law as a function of the current state:
\begin{align}
    \mathbf{u}[t] = \pi_{\boldsymbol{\theta}}(\mathbf{x}[t])
\end{align}
Here, $\boldsymbol{\theta}$ represents the trainable policy parameters. The network outputs $\mathbf{u}[t]=[v[t],\theta[t],\mathbf{a}[t]^T]^T$: the displacement head passes through a sigmoid scaled by $v_{\max}$, the heading head is bounded to $[-\pi,\pi]$, and the allocation logits pass through a global softmax. Because the state fed back to the policy depends on $\boldsymbol{\theta}$, the parameter gradient must use the closed-loop rather than the open-loop adjoint. Define
\begin{align}
    \mathbf{F}_{\boldsymbol{\theta}}(\mathbf{x})
    &\triangleq \mathbf{f}(\mathbf{x},\pi_{\boldsymbol{\theta}}(\mathbf{x})),&
    \ell_{\boldsymbol{\theta}}(\mathbf{x})
    &\triangleq L(\mathbf{x},\pi_{\boldsymbol{\theta}}(\mathbf{x})).
\end{align}
For $J=\sum_{t=0}^{T-1}\ell_{\boldsymbol{\theta}}(\mathbf{x}[t])$, the closed-loop adjoint satisfies
\begin{align}
    \boldsymbol{\lambda}[T]&=\mathbf{0},\\
    \boldsymbol{\lambda}[t]
    &=\nabla_{\mathbf{x}}\ell_{\boldsymbol{\theta}}(\mathbf{x}[t])
      +
      \left(\nabla_{\mathbf{x}}\mathbf{F}_{\boldsymbol{\theta}}(\mathbf{x}[t])\right)^T
      \boldsymbol{\lambda}[t+1],
    \label{eq:closed_loop_adjoint}
\end{align}
and
\begin{align}
    \nabla_{\boldsymbol{\theta}}J
    =\sum_{t=0}^{T-1}\!\left[
      \nabla_{\boldsymbol{\theta}}\ell_{\boldsymbol{\theta}}(\mathbf{x}[t])
      +\left(\nabla_{\boldsymbol{\theta}}\mathbf{F}_{\boldsymbol{\theta}}(\mathbf{x}[t])\right)^T
      \boldsymbol{\lambda}[t+1]\right].
    \label{eq:closed_loop_gradient}
\end{align}
Reverse-mode AD evaluates \eqref{eq:closed_loop_adjoint}--\eqref{eq:closed_loop_gradient} without explicitly forming the Jacobians. The open-loop Hamiltonian derivative $\nabla_{\mathbf{u}[t]}\mathcal{H}$ remains the action-level sensitivity, while the closed-loop recursion additionally accounts for the policy-state Jacobian and parameter sharing across time. Thus parameter-space stationarity does not generally imply the action-wise PMP condition at every slot.

\subsection{Environment with Differentiable Transition}
The structured wireless world model is implemented as one differentiable computational graph. For a fixed initial state and exogenous-noise draw, the time-unrolled transition and stage cost define a scalar function $J=\mathcal{G}(\mathbf{x}[0],\boldsymbol{\theta})$. Reverse-mode automatic differentiation (AD) evaluates the closed-loop adjoint in \eqref{eq:closed_loop_adjoint}--\eqref{eq:closed_loop_gradient} without explicitly materializing horizon-wide Jacobians.

The transition and stage cost use differentiable elementary operations except for backlog clipping. The kinematic update uses addition and trigonometric functions; the task update uses $\max\{0,\cdot\}$, which is differentiable away from its kink and receives the implementation-selected subgradient at the kink. Equation~\eqref{eq:rate_extended} combines the clamped distance, logarithm, softmax allocation, and reparameterized fading. We define the local transition Jacobians
    \begin{align}
        \mathbf{A}[t] \triangleq \frac{\partial \mathbf{f}(\mathbf{x}[t], \mathbf{u}[t])}{\partial \mathbf{x}[t]} \in \mathbb{R}^{(2+N) \times (2+N)} \\
        \mathbf{B}[t] \triangleq \frac{\partial \mathbf{f}(\mathbf{x}[t], \mathbf{u}[t])}{\partial \mathbf{u}[t]} \in \mathbb{R}^{(2+N) \times  (2+N)}
    \end{align}
These matrices contain the linearized dynamics around the current trajectory point. The instantaneous cost, $L(\mathbf{x}[t]) = \sum_{i=1}^{N} d_i[t]$, is a linear sum, which is trivially differentiable. Its partial derivatives are:
    \begin{align}
        &\frac{\partial L(\mathbf{x}[t])}{\partial \mathbf{x}[t]} = \begin{bmatrix} \mathbf{0}_{2 \times 1} \\ \mathbf{1}_{N \times 1} \end{bmatrix}^T, \\
        &\frac{\partial L(\mathbf{x}[t])}{\partial \mathbf{u}[t]} = \mathbf{0}_{1 \times  (2+N)}
    \end{align}

During a rollout, the policy and transition alternate for $T$ slots and accumulate $J=\sum_{t=0}^{T-1}L(\mathbf{x}[t])$. Applying reverse-mode AD to this scalar traverses the same graph backward: its state sensitivity is the co-state, its action sensitivity is the Hamiltonian derivative in \eqref{eq:pmp_control}, and parameter sharing is handled by the closed-loop recursion. We state these identities once rather than duplicating the adjoint and action-gradient equations in the implementation section. For a fixed exogenous-noise realization, the resulting pathwise sensitivities are exact; variance across stochastic channel realizations is not eliminated.

\section{Learning to Exploit the Wireless World Model}\label{sec-5}
The L4V framework uses two interconnected modules: a parameterized neural policy and the structured differentiable wireless world model. The policy $\pi_{\boldsymbol{\theta}}$ maps the current communication state to $\mathbf{u}[t] = \pi_{\boldsymbol{\theta}}(\mathbf{x}[t])$. The world model $\mathcal{S}$ embodies the kinematic, channel, resource, backlog, and cost equations. During training, the two modules are unrolled in closed loop over horizon $T$, so that $\mathbf{x}[t+1] = \mathbf{f}(\mathbf{x}[t], \pi_{\boldsymbol{\theta}}(\mathbf{x}[t]))$ and $\mathbf{x}[t] = \Phi_t(\mathbf{x}, \boldsymbol{\theta})$. Its explicit differentiable structure exposes the Jacobians $\frac{\partial \mathbf{f}}{\partial \mathbf{x}[t]}$ and $\frac{\partial \mathbf{f}}{\partial \mathbf{u}[t]}$ rather than treating the transition as a black box. Consequently, the total cost is a single differentiable function of the policy parameters:
\begin{align}
    J(\boldsymbol{\theta}) = \sum_{t=0}^{T-1} L(\mathbf{x}[t]) = \sum_{t=0}^{T-1} L(\Phi_{t}(\mathbf{x}, \boldsymbol{\theta}))
\end{align}
This formulation expresses the trajectory problem as finite-dimensional policy-parameter optimization. It permits reverse-mode differentiation of the unrolled loss, but the resulting objective remains nonconvex.

\subsection{Training Procedure}
The training procedure follows a ``simulation as computation'' paradigm and is deterministic conditional on the initial state and exogenous-noise draw. Each iteration performs a forward rollout and a reverse pass for the conditional pathwise gradient. Starting with an initial state $\mathbf{x}$, the trajectory is generated recursively for $t\in[0,T-1]$:
\begin{align}
    \mathbf{u}[t] &= \pi_{\boldsymbol{\theta}}(\mathbf{x}[t]) \label{eq:policy_fwd} \\
    \mathbf{x}[t+1] &= \mathbf{f}(\mathbf{x}[t], \mathbf{u}[t]) \label{eq:dynamics_fwd}
\end{align}
This forward pass is executed within an automatic differentiation (AD) framework, which records every computation and its dependencies in a dynamic computational graph. Concurrently, the total cost functional $J$ is accumulated by summing the instantaneous costs incurred at each step:
\begin{align}
    J(\boldsymbol{\theta}) = \sum_{t=0}^{T-1} L(\mathbf{x}[t])
\end{align}
Upon completion of the rollout, $J(\boldsymbol{\theta})$ exists as the terminal node of the computational graph, representing the total cost as a differentiable function of the policy parameters $\boldsymbol{\theta}$.

The core of the L4V methodology lies in the implementation of the backward pass. By invoking a reverse-mode AD routine on the scalar cost $J$, we compute the total derivative $\frac{dJ}{d\boldsymbol{\theta}}$. This operation, commonly known as Backpropagation Through Time (BPTT), implicitly performs the discrete adjoint method, providing an efficient and exact calculation of the required variational gradients. The total gradient is found by applying the chain rule over the entire unrolled graph, which decomposes into a sum over the time steps:
\begin{align}
    \frac{dJ}{d\boldsymbol{\theta}} = \sum_{t=0}^{T-1} \frac{dJ}{d\mathbf{u}[t]} \frac{\partial \mathbf{u}[t]}{\partial \boldsymbol{\theta}}
\end{align}
Here $\frac{\partial \mathbf{u}[t]}{\partial \boldsymbol{\theta}} = \nabla_{\boldsymbol{\theta}}\pi_{\boldsymbol{\theta}}(\mathbf{x}[t])$ is the policy Jacobian at fixed state. BPTT obtains $\frac{dJ}{d\mathbf{u}[t]}$ as the Hamiltonian action sensitivity in \eqref{eq:pmp_control}, while the closed-loop recursion \eqref{eq:closed_loop_adjoint} accounts for the state dependence of later policy outputs. Each iteration therefore returns the exact conditional pathwise parameter gradient and applies an optimizer step; it does not assert that parameter stationarity is equivalent to slot-wise Hamiltonian stationarity.

The implementation adds heading regularization and gradient clipping. In exact agreement with the released training code, the smoothness term is the mean squared change of the bounded heading output:
\begin{align}
    J_{\text{smooth}}(\boldsymbol{\theta}) =
    \frac{1}{T-1}\sum_{t=1}^{T-1}
    \big(\theta[t]-\theta[t-1]\big)^2.
    \label{eq:smooth}
\end{align}
It discourages slot-to-slot steering changes but is not a full actuator or acceleration model. Section~\ref{sec-6j} reports sensitivity to its coefficient $\beta$.

Beyond control smoothness, flight endurance is a first-order concern for battery-limited AAV platforms, so we optionally augment the objective with a flight-energy term based on the closed-form rotary-wing propulsion power model of Zeng and Zhang~\cite{zeng2017energy}:
\begin{align}
    P(\bar v[t]) = P_0\!\left(1+\frac{3\bar v[t]^2}{U_{\mathrm{tip}}^2}\right) + P_i\!\left(\sqrt{1+\frac{\bar v[t]^4}{4v_0^4}} - \frac{\bar v[t]^2}{2v_0^2}\right)^{\!1/2} \notag \\
    {}+ \frac{1}{2}d_0\rho s A\, \bar v[t]^3,
    \label{eq:energy}
\end{align}
where $\bar v[t]=c_vv[t]$ converts normalized displacement to the speed used by the propulsion model; the reported normalized experiment uses $c_v=1$. The remaining constants have their standard rotary-wing meanings \cite{zeng2017energy}. Accumulating the differentiable power term gives
\begin{align}
    J_{\text{energy}}(\boldsymbol{\theta}) = \sum_{t=0}^{T-1} P(\bar v[t])\, \tau,
    \label{eq:j_energy}
\end{align}
which Section~\ref{sec-6} uses to trade off completion time against flight energy as its weight $\gamma$ varies.

The new, regularized total cost functional, $J_{\text{total}}(\boldsymbol{\theta})$, is a weighted sum of the original task-oriented cost, the smoothness penalty, and, when enabled, the energy cost:
\begin{align}
    J_{\text{total}}(\boldsymbol{\theta}) = J(\boldsymbol{\theta}) + \beta J_{\text{smooth}}(\boldsymbol{\theta}) + \gamma J_{\text{energy}}(\boldsymbol{\theta})
\end{align}
where $\beta \ge 0$ controls heading regularization, and $\gamma \ge 0$ is the energy weight, with $\gamma=0$ disabling the energy term. Both terms are differentiable, so their gradients add to the backlog gradient:
\begin{align}
    \nabla_{\boldsymbol{\theta}} J_{\text{total}} = \nabla_{\boldsymbol{\theta}} J + \beta \nabla_{\boldsymbol{\theta}} J_{\text{smooth}} + \gamma \nabla_{\boldsymbol{\theta}} J_{\text{energy}}
\end{align}
For an interior slot $k$, the smoothness contribution to the heading gradient is
\begin{align}
    \frac{\partial J_{\text{smooth}}}{\partial \theta[k]}
    =\frac{2}{T-1}\big(2\theta[k]-\theta[k-1]-\theta[k+1]\big),
\end{align}
with the corresponding one-sided expressions at the endpoints and zero direct derivative with respect to $v[k]$.

Backpropagation through a long horizon $T$ resembles a deep network with shared weights, so gradient growth remains a practical concern. Proposition~\ref{prop:bounded} shows that the special block structure rules out geometric growth under stated boundedness assumptions, although polynomial growth remains possible. Norm clipping is therefore applied as a numerical safeguard. After computing $\mathbf{g} = \nabla_{\boldsymbol{\theta}} J_{\text{total}}$, we use
\begin{align}
    \hat{\mathbf{g}} = \min\!\left(1, \frac{C}{\|\mathbf{g}\|}\right) \mathbf{g},
    \label{eq:clip}
\end{align}
which preserves the descent direction while bounding the step magnitude.

\subsection{Convergence Analysis and Stability}
We distinguish three issues: correctness of a realized-rollout derivative, horizon-wise adjoint growth, and optimization of the resulting nonconvex objective. The statements below apply to the differentiable regions of the clipped-backlog dynamics; at a kink, AD returns a valid implementation-selected subgradient.

\begin{proposition}[Exact conditional pathwise gradient]\label{prop:zerovar}
Let $\boldsymbol{\epsilon}$ collect the initial-state and channel-noise variables used in a rollout. For fixed $\boldsymbol{\epsilon}$, reverse-mode AD returns the exact derivative $\nabla_{\boldsymbol{\theta}}J(\boldsymbol{\theta};\boldsymbol{\epsilon})$. If differentiation and expectation can be interchanged, then
\begin{align}
 \nabla_{\boldsymbol{\theta}}\mathbb{E}_{\boldsymbol{\epsilon}}[J]
 =\mathbb{E}_{\boldsymbol{\epsilon}}
 [\nabla_{\boldsymbol{\theta}}J(\boldsymbol{\theta};\boldsymbol{\epsilon})].
\end{align}
\end{proposition}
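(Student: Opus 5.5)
The argument has two parts. First, for fixed $\boldsymbol{\epsilon}$ the rollout is a deterministic composition of finitely many elementary maps, so reverse-mode AD computes its exact derivative. Second, the expectation identity is the stated interchange hypothesis applied to that derivative.

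\emph{Step 1: deterministic composition.} Fix $\boldsymbol{\epsilon}$, which fixes the initial state $\mathbf{x}[0]$ and the reparameterized noise variables $\epsilon_i[t],\epsilon_i^{\mathrm r}[t],\epsilon_i^{\mathrm i}[t]$ for all $i,t$. By \eqref{eq:fading}, every fading coefficient $\xi_i[t]$ is then a constant independent of $\boldsymbol{\theta}$. The closed-loop map $\mathbf{F}_{\boldsymbol{\theta}}(\mathbf{x})=\mathbf{f}(\mathbf{x},\pi_{\boldsymbol{\theta}}(\mathbf{x});\boldsymbol{\epsilon})$ is therefore deterministic. Hence $\mathbf{x}[t]=\Phi_t(\mathbf{x}[0],\boldsymbol{\theta})$ and
\[
J(\boldsymbol{\theta};\boldsymbol{\epsilon})=\sum_{t=0}^{T-1}\ell_{\boldsymbol{\theta}}(\Phi_t(\mathbf{x}[0],\boldsymbol{\theta}))
\]
is a finite composition of the policy network (affine maps, activations, sigmoid, bounded heading map, softmax), trigonometric functions, the clamped distance, the logarithm in \eqref{eq:rate_extended}, and $\max\{0,\cdot\}$.

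\emph{Step 2: AD computes the exact closed-loop gradient.} On the region where no $\max$ sits at its kink (for the backlog clipping and the $\delta$-clamp), each of these elementary maps is $C^1$. By the chain rule, $J$ is then differentiable in $\boldsymbol{\theta}$. I would show by backward induction on $t$ that the quantity AD accumulates at the state node $\mathbf{x}[t]$ equals $\boldsymbol{\lambda}[t]$ in \eqref{eq:closed_loop_adjoint}, with base case $\boldsymbol{\lambda}[T]=\mathbf{0}$. Summing the contributions at each shared-parameter node then reproduces \eqref{eq:closed_loop_gradient}. Because reverse-mode AD is an exact reordering of the chain-rule products, with no sampling or finite differencing, its output is $\nabla_{\boldsymbol{\theta}}J(\boldsymbol{\theta};\boldsymbol{\epsilon})$ up to floating-point rounding. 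At a kink, it returns the implementation-selected one-sided derivative, consistent with the caveat stated before the proposition.

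\emph{Step 3: expectation identity.} Since $\nabla_{\boldsymbol{\theta}}\mathbb{E}_{\boldsymbol{\epsilon}}[J]=\mathbb{E}_{\boldsymbol{\epsilon}}[\nabla_{\boldsymbol{\theta}}J]$ is assumed whenever the interchange is legitimate, it suffices to exhibit a sufficient condition and note that Step 2 supplies the integrand. I would use dominated convergence on the difference quotients: if $J(\cdot;\boldsymbol{\epsilon})$ is differentiable a.s. and locally Lipschitz in $\boldsymbol{\theta}$ with an integrable constant $K(\boldsymbol{\epsilon})$, the mean value theorem bounds the quotients by $K$. This is where the main obstacle lies.

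\emph{The obstacle: a.s. differentiability and an integrable Lipschitz constant.} I would argue both in turn.
\begin{itemize}
\item \textbf{Kinks have measure zero.} The kink events $d_i[t]-R_i[t]\tau=0$ are level sets of functions that depend smoothly on the Gaussian variables. For a fixed $\boldsymbol{\theta}$, these sets have Lebesgue measure zero.
\item \textbf{Lipschitz constant.} The factor $\exp(\kappa\epsilon_i[t])$ and the Rician magnitude enter only through $\ln(1+c\,\xi)$. Its derivative with respect to the geometry is bounded by a constant times the geometry gradient, since $c\xi/(1+c\xi)\le1$, and $\delta$ bounds the distance from below. So the rate's sensitivity is bounded independently of $\xi$, and the backlog map is 1-Lipschitz.
\item \textbf{Integrability.} $K(\boldsymbol{\epsilon})$ is then at most polynomial in horizon quantities with finite Gaussian moments, so it is integrable.
\end{itemize}
If this bound proves delicate, I would simply state the interchange as the hypothesis, exactly as the proposition does.
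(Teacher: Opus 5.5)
Your Steps 1--2 and the first sentence of Step 3 are the paper's proof. For fixed $\boldsymbol{\epsilon}$ the rollout is a finite composition of differentiable maps, reverse-mode AD is the chain rule applied to it, and the expectation identity is simply the assumed interchange, so the proposition is proved at that point. The dominated-convergence sketch is not an obstacle but an optional extra, and as written it is not airtight: level sets of smooth maps need not be null, and with a fixed layout $\mathbf{q}[1]$ is noise-free given $\boldsymbol{\theta}$, so the $\delta$-clamp kink can be hit for every $\boldsymbol{\epsilon}$. You are therefore right to fall back on the stated hypothesis, as the paper does.
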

\begin{IEEEproof}
Conditioned on $\boldsymbol{\epsilon}$, the rollout is a finite composition of differentiable operations, and reverse-mode AD is the chain rule applied to that composition. The expectation identity follows from the assumed interchange. Although the conditional derivative has no numerical-estimation error, stochastic-rollout gradients generally have nonzero sampling variance.
\end{IEEEproof}

\begin{proposition}[No geometric growth from the physical-state Jacobian]\label{prop:bounded}
Suppose the active-set derivatives and the rate-position Jacobians are uniformly bounded by $c_R$. Then products of the open-loop physical-state Jacobians satisfy
$\|\mathbf{A}[T-1]\cdots\mathbf{A}[t]\|=\mathcal{O}(T-t)$, and the corresponding open-loop co-state is at most $\mathcal{O}((T-t)^2)$.
\end{proposition}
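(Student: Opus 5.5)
The plan is to exploit the block structure of the open-loop state Jacobian. We order the state as $\mathbf{x}=[\mathbf{q}^T,\mathbf{d}^T]^T$. The kinematic update $\mathbf{q}[t+1]=\mathbf{q}[t]+v[t][\cos\theta[t],\sin\theta[t]]^T$ does not depend on $\mathbf{x}[t]$ beyond the identity, so $\partial\mathbf{q}[t+1]/\partial\mathbf{q}[t]=\mathbf{I}_2$ and $\partial\mathbf{q}[t+1]/\partial\mathbf{d}[t]=\mathbf{0}$. For the backlog, $d_i[t+1]=\max\{0,d_i[t]-R_i[t]\tau\}$ depends on $\mathbf{d}[t]$ only through $d_i[t]$ itself. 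In the open loop the rate $R_i[t]$ depends on $\mathbf{q}[t]$ but not on $\mathbf{d}[t]$. Hence $\partial d_i[t+1]/\partial d_j[t]=s_i[t]\,\delta_{ij}$, where $s_i[t]\in\{0,1\}$ is the active-set indicator (derivative of the clip). We also have $\partial d_i[t+1]/\partial\mathbf{q}[t]=-s_i[t]\tau\,\nabla_{\mathbf{q}}R_i[t]$. This gives the block lower-triangular form
\begin{align}
\mathbf{A}[t]=\begin{bmatrix}\mathbf{I}_2 & \mathbf{0}\\ \mathbf{C}[t] & \mathbf{S}[t]\end{bmatrix},\quad \mathbf{S}[t]=\mathrm{diag}(s_i[t]),\quad \|\mathbf{C}[t]\|\le c_R\tau\sqrt{N},
\end{align}
where the bound on $\mathbf{C}[t]$ uses the hypothesis that the active-set derivatives and rate-position Jacobians are bounded by $c_R$.

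Next I will compute the product $\mathbf{A}[T-1]\cdots\mathbf{A}[t]$ by induction on its length. Because the structure is block triangular, the product has diagonal blocks $\mathbf{I}_2$ and $\prod_k\mathbf{S}[k]$. The latter is a diagonal 0/1 matrix, so its norm is at most $1$. The off-diagonal block is a sum of $T-t$ terms of the form $\big(\prod_{k>m}\mathbf{S}[k]\big)\mathbf{C}[m]$, and each term has norm at most $c_R\tau\sqrt{N}$. Applying the triangle inequality, the off-diagonal block has norm at most $(T-t)c_R\tau\sqrt{N}$. The whole product therefore has norm $\mathcal{O}(T-t)$, with constants depending on $N,\tau,c_R$ but not on the horizon. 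The key point is that the $\mathbf{I}_2$ and $\mathbf{S}$ diagonal blocks have spectral radius at most one, so no geometric factor appears; the only growth comes from the number of cross terms.

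For the co-state, I will unroll the open-loop recursion $\boldsymbol{\lambda}[t]=\nabla_{\mathbf{x}}L+\mathbf{A}[t]^T\boldsymbol{\lambda}[t+1]$ with $\boldsymbol{\lambda}[T]=\mathbf{0}$. This gives
\begin{align}
\boldsymbol{\lambda}[t]=\sum_{k=t}^{T-1}\big(\mathbf{A}[k-1]\cdots\mathbf{A}[t]\big)^T\nabla_{\mathbf{x}}L,
\end{align}
where the empty product is read as $\mathbf{I}$. Since $\|\nabla_{\mathbf{x}}L\|=\sqrt{N}$ and the $k$-th product is $\mathcal{O}(k-t)$ by the previous step, summing gives $\sum_{k=t}^{T-1}\mathcal{O}(k-t+1)=\mathcal{O}((T-t)^2)$. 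A sharper reading is also available: the $\mathbf{d}$-component of $\boldsymbol{\lambda}$ stays bounded by $T-t$, and the quadratic growth sits only in the $\mathbf{q}$-component. This confirms that the stated bound is the right order.

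The main obstacle is justifying that $\mathbf{A}[t]$ really has this triangular form. This needs two things. First, the open-loop setting, in which $\mathbf{u}[t]$ is held fixed so the softmax allocation does not feed back from $\mathbf{d}$. Second, the treatment of the clip kink, which is handled by taking $s_i[t]\in[0,1]$ as the implementation-selected subgradient; this keeps $\|\mathbf{S}[t]\|\le1$. I will state explicitly that the closed-loop Jacobian $\nabla_{\mathbf{x}}\mathbf{F}_{\boldsymbol{\theta}}$ adds a policy term $\mathbf{B}[t]\nabla_{\mathbf{x}}\pi_{\boldsymbol{\theta}}$ that can break this structure. That is why the proposition is restricted to the open-loop physical-state Jacobians, which motivates keeping gradient clipping.
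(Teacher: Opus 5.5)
Your proposal is correct and follows the same route as the paper's proof. Both use the block lower-triangular form of $\mathbf{A}[t]$, with an $\mathbf{I}_2$ block and a $0/1$ diagonal block, so the lower-left block accumulates additively across the product and grows linearly; summing the unrolled co-state recursion then adds one more factor of the remaining horizon. You fill in details the paper leaves implicit: the explicit rows of $\mathbf{C}[t]$ with the $\tau\sqrt{N}$ constant, the closed-form off-diagonal sum, and the observation that the quadratic co-state growth lies only in the position component.
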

\begin{IEEEproof}
The Jacobian has block form
\begin{align}
\mathbf{A}[t]=
\begin{bmatrix}\mathbf{I}_2&\mathbf{0}\\
\mathbf{C}[t]&\mathbf{D}[t]\end{bmatrix},
\end{align}
where $\|\mathbf{C}[t]\|\le c_R$ and $\mathbf{D}[t]$ is diagonal with entries in $\{0,1\}$ away from clipping kinks. Multiplying these block-triangular matrices accumulates the lower-left block additively, giving a product norm linear in the number of factors rather than exponential. Summing the propagated bounded stage-cost derivatives in the adjoint recursion adds one further factor of the remaining horizon.
\end{IEEEproof}

This proposition excludes a geometric explosion caused by the stated physical dynamics; it does not claim a horizon-independent gradient bound, and the closed-loop policy Jacobian can introduce additional amplification. Gradient clipping is therefore a practical safeguard rather than a consequence of the proposition.

\begin{theorem}[Idealized full-gradient descent]\label{thm:convergence}
Assume $J$ is lower bounded and has an $L_g$-Lipschitz gradient. The idealized update
$\boldsymbol{\theta}_{k+1}=\boldsymbol{\theta}_k-\zeta\nabla J(\boldsymbol{\theta}_k)$
with $\zeta\in(0,1/L_g]$ satisfies
\begin{align}
J(\boldsymbol{\theta}_{k+1})\le J(\boldsymbol{\theta}_k)
-\frac{\zeta}{2}\|\nabla J(\boldsymbol{\theta}_k)\|^2,
\end{align}
and
\begin{align}
\min_{0\le k\le K_{\mathrm{it}}}\|\nabla J(\boldsymbol{\theta}_k)\|^2
\le\frac{2(J(\boldsymbol{\theta}_0)-J^\star)}
{\zeta(K_{\mathrm{it}}+1)}.
\end{align}
\end{theorem}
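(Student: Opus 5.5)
The argument is the standard descent-lemma proof for smooth nonconvex objectives; no structure of the wireless world model is needed beyond the stated assumptions. I interpret $J^\star$ as the assumed lower bound, $J^\star\le\inf_{\boldsymbol{\theta}}J(\boldsymbol{\theta})$.

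\emph{Step 1: the quadratic upper bound.} Because $\nabla J$ is $L_g$-Lipschitz, for any $\boldsymbol{\theta},\boldsymbol{\theta}'$ we have
\begin{align}
J(\boldsymbol{\theta}')\le J(\boldsymbol{\theta})+\nabla J(\boldsymbol{\theta})^T(\boldsymbol{\theta}'-\boldsymbol{\theta})+\frac{L_g}{2}\|\boldsymbol{\theta}'-\boldsymbol{\theta}\|^2 .
\end{align}
This follows by writing $J(\boldsymbol{\theta}')-J(\boldsymbol{\theta})=\int_0^1\nabla J(\boldsymbol{\theta}+s(\boldsymbol{\theta}'-\boldsymbol{\theta}))^T(\boldsymbol{\theta}'-\boldsymbol{\theta})\,ds$. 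We then subtract $\nabla J(\boldsymbol{\theta})^T(\boldsymbol{\theta}'-\boldsymbol{\theta})$, apply Cauchy--Schwarz with the Lipschitz bound, and integrate $s L_g\|\boldsymbol{\theta}'-\boldsymbol{\theta}\|^2$ over $[0,1]$.

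\emph{Step 2: per-step decrease.} Substitute $\boldsymbol{\theta}=\boldsymbol{\theta}_k$ and $\boldsymbol{\theta}'=\boldsymbol{\theta}_{k+1}=\boldsymbol{\theta}_k-\zeta\nabla J(\boldsymbol{\theta}_k)$. This gives
\begin{align}
J(\boldsymbol{\theta}_{k+1})\le J(\boldsymbol{\theta}_k)-\zeta\Big(1-\frac{L_g\zeta}{2}\Big)\|\nabla J(\boldsymbol{\theta}_k)\|^2 .
\end{align}
Since $\zeta\le 1/L_g$, the factor $1-L_g\zeta/2\ge 1/2$, which yields the first claimed inequality.

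\emph{Step 3: telescoping.} Sum the first inequality over $k=0,\dots,K_{\mathrm{it}}$; the left side telescopes to
\begin{align}
\frac{\zeta}{2}\sum_{k=0}^{K_{\mathrm{it}}}\|\nabla J(\boldsymbol{\theta}_k)\|^2\le J(\boldsymbol{\theta}_0)-J(\boldsymbol{\theta}_{K_{\mathrm{it}}+1})\le J(\boldsymbol{\theta}_0)-J^\star .
\end{align}
Bounding the minimum over $k$ by the average of the $K_{\mathrm{it}}+1$ terms gives the stated rate.

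\emph{Main obstacle.} None of these steps is technically difficult. The only delicate point is interpretive: the hypotheses must be taken as stated. The clipped-backlog dynamics contain $\max\{0,\cdot\}$ kinks and the clamped distance in \eqref{eq:rate_extended}, so the true $J$ is not globally $C^{1,1}$. Moreover, the practical algorithm uses the clipping \eqref{eq:clip} and stochastic noise draws. I would therefore remark that the theorem covers the idealized deterministic full-gradient iteration on a region where $J$ is smooth, for example with a smoothed clipping function. It does not cover the clipped stochastic update actually implemented.
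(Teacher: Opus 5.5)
Your proposal is correct and follows the same route as the paper, which simply invokes the descent lemma for $L_g$-smooth functions and telescopes the per-iteration decrease. You additionally write out the descent lemma and the min-versus-average step explicitly, and your caveat about the $\max\{0,\cdot\}$ kinks, gradient clipping, and the stochastic AdamW updates matches the paper's own scoping remarks after the theorem.
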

\begin{IEEEproof}
The result follows by applying the descent lemma for $L_g$-smooth functions \cite{nocedal2006numerical} and telescoping the per-iteration decrease.
\end{IEEEproof}

For a single exact-gradient step, clipping in \eqref{eq:clip} preserves the descent direction and the descent-lemma decrease whenever its effective step does not exceed $1/L_g$. The stationary-point rate in Theorem~\ref{thm:convergence}, however, is stated only for fixed-step full-gradient descent. The experiments use AdamW and may resample channel noise, so the theorem does not constitute a convergence guarantee for the implemented optimizer. Likewise, a stationary neural-policy parameter vector need not satisfy every action-wise PMP condition because parameters are shared across states and time.

\subsection{Comparison with Reinforcement Learning}
L4V and reinforcement learning address the same sequential decision problem but use different information. Policy-gradient RL estimates a score-function gradient from sampled returns and does not require derivatives of the environment. L4V instead differentiates a known simulator: one forward--backward pass gives the exact derivative of that realized rollout, while batches of stochastic rollouts are still needed to estimate the expected-objective gradient. The comparison therefore quantifies the benefit and cost of model-derivative access; it is not evidence that L4V has the model-free robustness of DRL. The adjoint provides dense temporal credit assignment through the dynamics, whereas a misspecified differentiable model can bias the resulting policy.
\begin{figure*}[!t]
\captionsetup{font={small}, skip=4pt}
    \centering
    \subfigure[Area: collection rate.]
    {
       \includegraphics[width=0.42\linewidth]{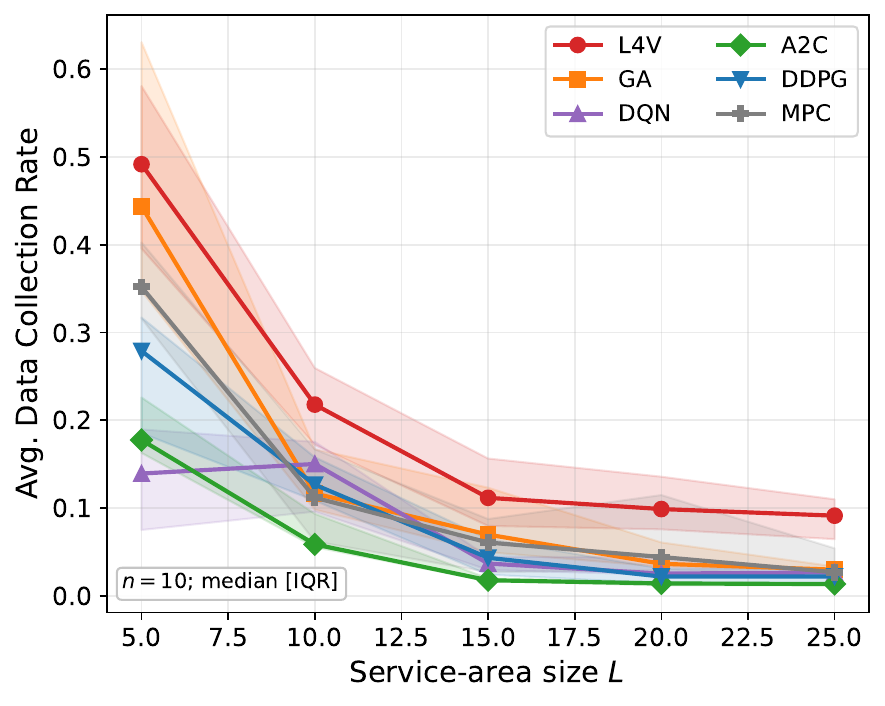}
       \label{fig:area_rate}
    }
    \hfil
    \subfigure[Area: completion time.]
    {
       \includegraphics[width=0.42\linewidth]{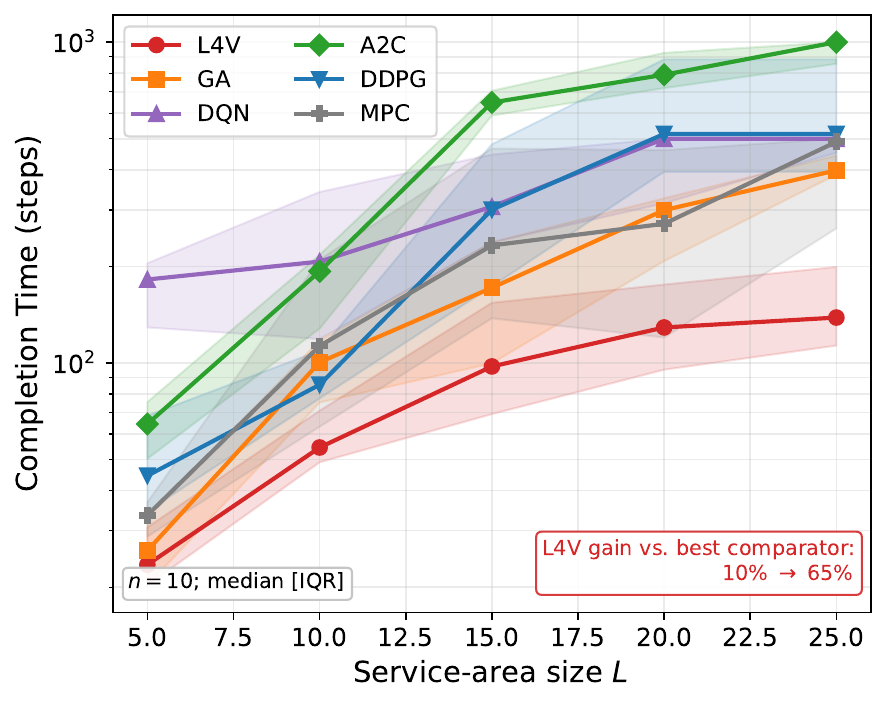}
       \label{fig:area_step}
    }
    \\
    \subfigure[Noise: collection rate.]
    {
       \includegraphics[width=0.42\linewidth]{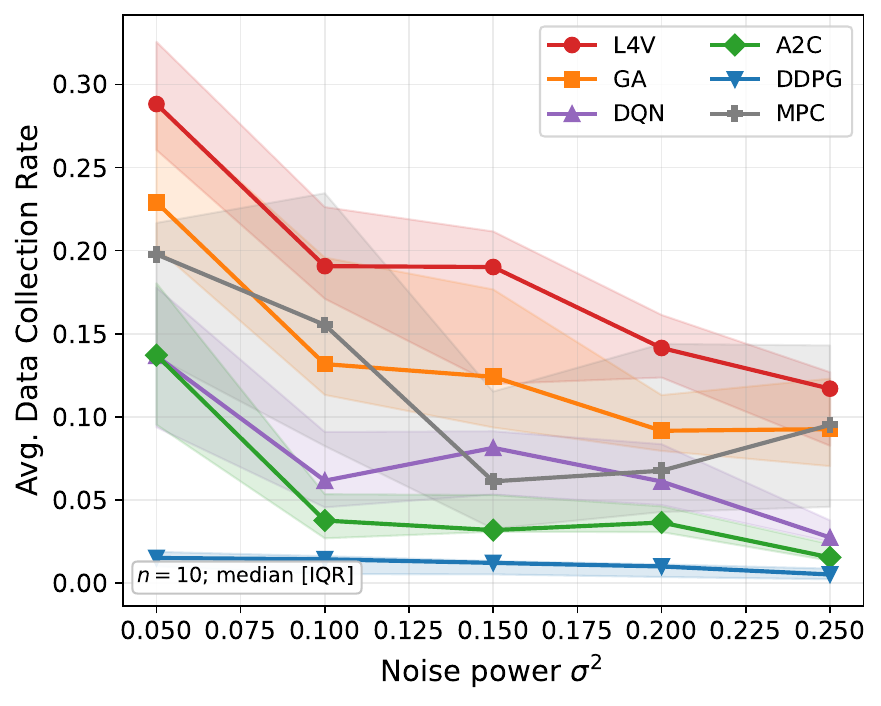}
       \label{fig:sigma_rate}
    }
    \hfil
    \subfigure[Noise: completion time.]
    {
       \includegraphics[width=0.42\linewidth]{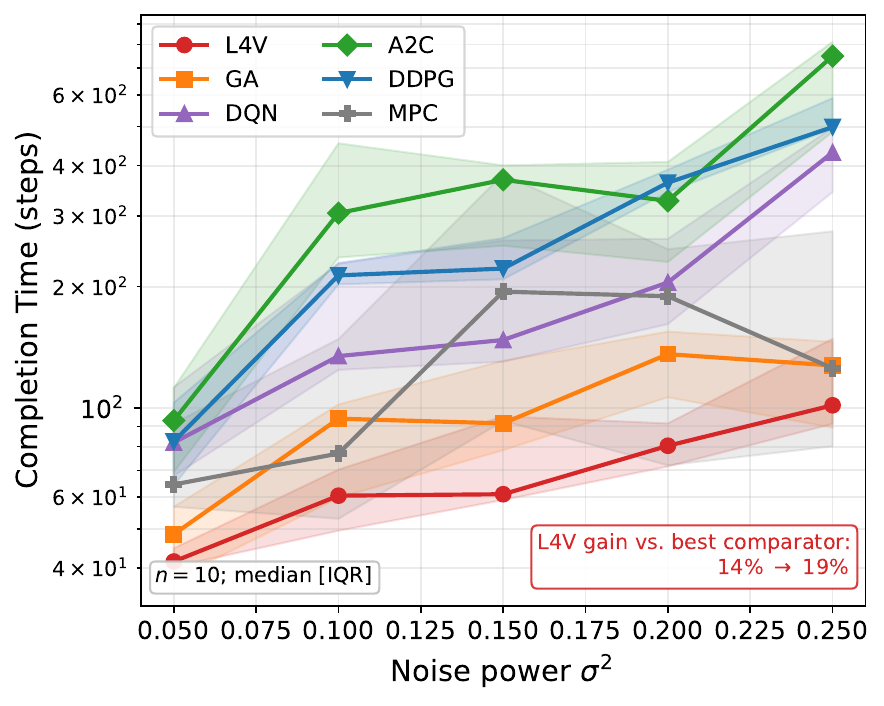}
       \label{fig:sigma_step}
    }
    \vspace{-6pt}
    \caption{Performance versus service-area size $L$ (a, b) and noise power $\sigma^2$ (c, d). Medians over $10$ trials with inter-quartile bands; completion time on a log axis.}
    \vspace{-8pt}
    \label{fig-area}
\end{figure*}

\section{Simulation Results}\label{sec-6}
\subsection{Simulation Setup}
We report completion time, average data collection rate (delivered demand divided by mission length), and compute cost. Each sweep setting contains $10$ independent trials and the curves show the median and inter-quartile range (IQR). In each trial, $K$ users are uniformly placed in $[-L/2,L/2]^2$, normalized demands are sampled from $[2,4)$, and the AAV starts at the origin. Every learned method is trained and evaluated on that same trial layout; no curve measures transfer to unseen layouts. The default configuration is $K=4$, $L=10$, $\eta=1$, $\sigma^2=0.1$, $\tau=0.1$, maximum per-slot displacement $v_{\max}=0.2$, and smoothness weight $\beta=0.01$. We vary $K\in\{2,4,6,8,10\}$, $L\in\{5,10,15,20,25\}$, $\eta\in\{0.5,1,1.5,2.0,2.5\}$, and $\sigma^2\in\{0.05,0.1,0.15,0.2,0.25\}$ one at a time. L4V uses a 64--64--32 MLP on the AAV position, user-relative positions, and remaining-demand vector. The broad sweeps use the deterministic equal-split channel ($a_i[t]=1/N$, $\xi_i[t]=1$, $\nu=2$), whereas Section~\ref{sec-6f} evaluates learned allocation under $\kappa=0.15$ and $K_{\mathrm r}=6$. A mission succeeds below total backlog $10^{-3}K$ and is otherwise capped at $500$ slots; capped runs remain in all statistics. L4V uses AdamW with learning rate $10^{-3}$ and at most $2000$ updates. The allocation, energy, and reviewer-driven stress studies use paired scenario seeds and retain raw per-rollout trajectories and outcomes.

\subsection{Baseline Methods}\label{sec-6b}
We compare L4V against five implementations spanning evolutionary optimization, value-based and actor-critic reinforcement learning, and gradient-based optimal control.
\begin{itemize}
    \item \textbf{Genetic Algorithm (GA)} \cite{lambora2019genetic}: a derivative-free evolutionary method that maintains a population of candidate trajectories improved by selection, crossover, and mutation, representing meta-heuristic optimization that uses no gradient information.
    \item \textbf{Deep Q-Network (DQN)} \cite{mnih2015human}: a value-based reinforcement learning method; since it requires a discrete action space, we discretize the heading angle, which highlights the difficulty of applying discrete control to a smooth physical trajectory task.
    \item \textbf{Advantage Actor-Critic (A2C)} \cite{grondman2012survey}: an on-policy actor-critic algorithm that optimizes a stochastic policy via the advantage function, illustrating the impact of high-variance stochastic gradients against the deterministic adjoint gradient of L4V.
    \item \textbf{Deep Deterministic Policy Gradient (DDPG)} \cite{silver2014deterministic}: an off-policy actor-critic method for continuous control using replay buffers, serving as the primary reinforcement learning benchmark.
    \item \textbf{Differentiable Model Predictive Control (MPC)}: a receding-horizon shooting controller on the same differentiable simulator used by L4V. At each slot it optimizes an $H_p$-step sequence, executes the first action, and re-solves. This provides a gradient-based control reference with the same model access but a different compute profile.
\end{itemize}
For reproducibility, the learned methods share the same state encoding and 64--64--32 hidden widths, but their action spaces and budgets follow their implementations. DQN uses 256 discrete headings, Adam at $10^{-3}$, a 10\,000-transition replay buffer, batch size 64, discount $0.99$, $\epsilon$ from $1$ to $0.01$ with factor $0.995$, a target update every 10 iterations, and 2000 updates. A2C uses a categorical eight-heading policy, Adam at $10^{-3}$, discount $0.99$, entropy weight $0.01$, and 1000 updates. DDPG controls continuous speed and heading with actor/critic learning rates $10^{-3}$/$2\times10^{-3}$, replay size 10\,000, batch size 64, discount $0.99$, target factor $0.005$, exploration noise $0.2$ to $0.01$ with factor $0.995$, and 2000 updates. GA uses population 300, crossover 0.7, mutation 0.3, and at most 5000 generations. MPC uses horizon 15 and 25 inner Adam steps at learning rate $0.05$ per replanning solve. These differences prevent a strict equal-compute claim; the reported wall-clock and mission metrics expose the resulting trade-off.
\begin{figure*}[!t]
\captionsetup{font={small}, skip=4pt}
    \centering
    \subfigure[Density: collection rate.]
    {
       \includegraphics[width=0.42\linewidth]{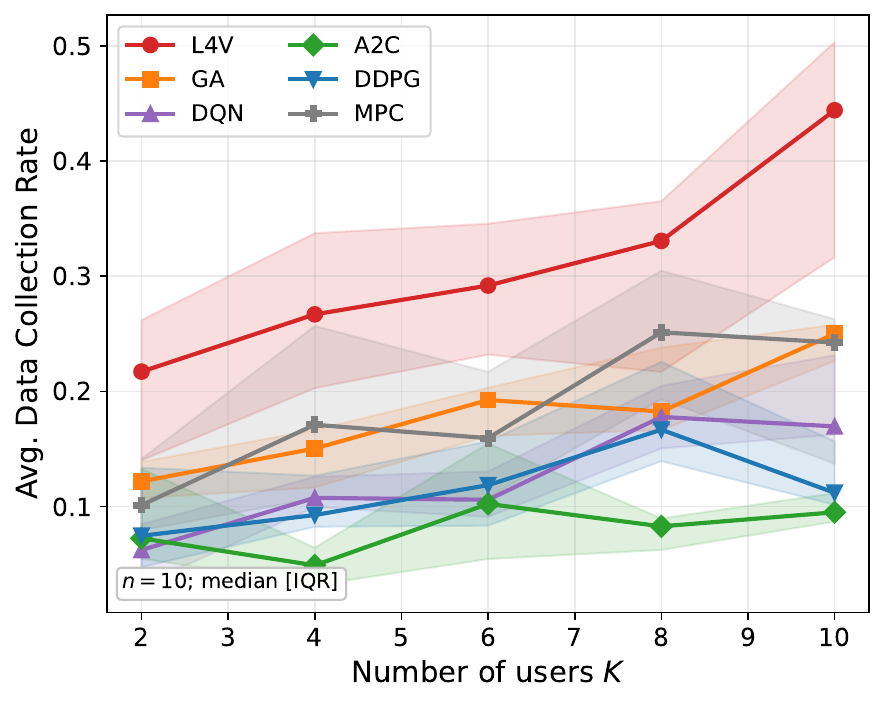}
       \label{fig:user_rate}
    }
    \hfil
    \subfigure[Density: completion time.]
    {
       \includegraphics[width=0.42\linewidth]{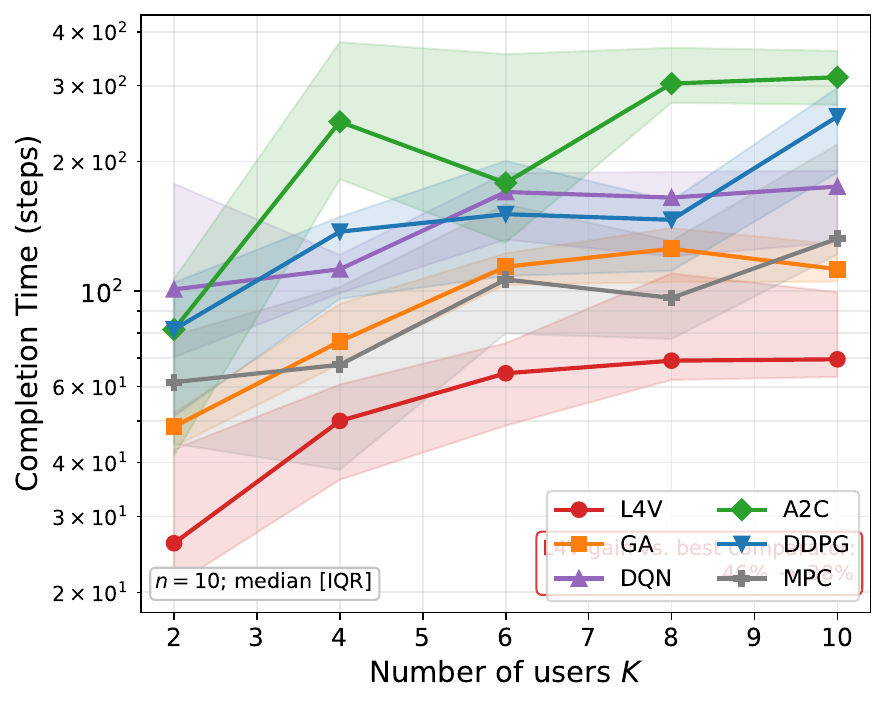}
       \label{fig:user_step}
    }
    \\
    \subfigure[Gain: collection rate.]
    {
       \includegraphics[width=0.42\linewidth]{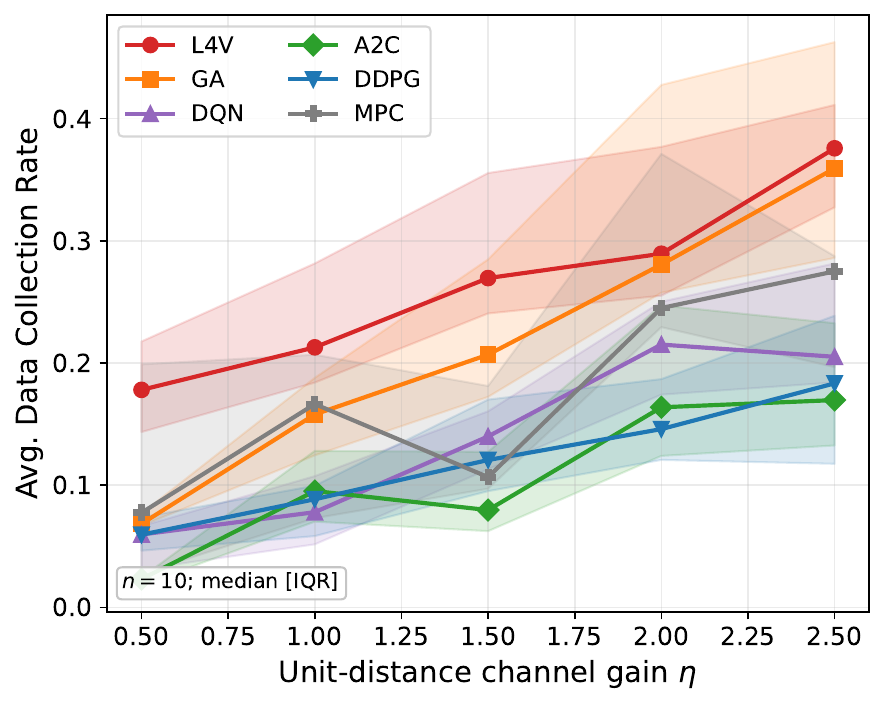}
       \label{fig:hunit_rate}
    }
    \hfil
    \subfigure[Gain: completion time.]
    {
       \includegraphics[width=0.42\linewidth]{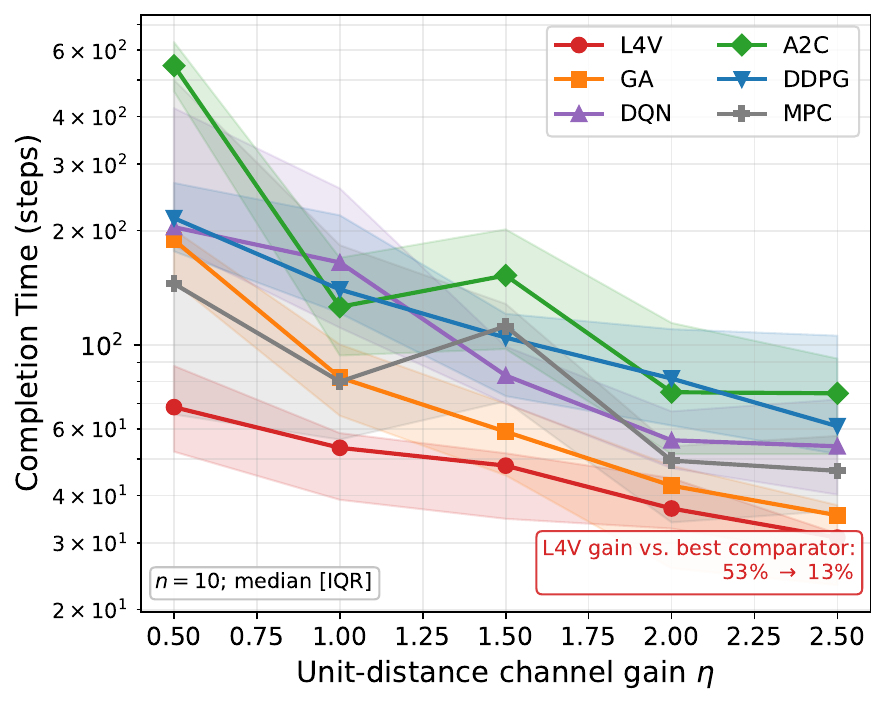}
       \label{fig:hunit_step}
    }
    \vspace{-6pt}
    \caption{Performance versus user density $K$ (a, b) and channel gain $\eta$ (c, d). Medians over $10$ trials with inter-quartile bands; completion time on a log axis.}
    \vspace{-8pt}
    \label{fig-userhunit}
\end{figure*}

\subsection{Impact of Service-Area Size}\label{sec-6c}
Fig.~\ref{fig-area} reports performance as the service-area side length $L$ grows from $5$ to $25$. L4V has the lowest median completion time at every tested scale, and its reduction relative to the strongest comparator grows from $10\%$ to $65\%$. A larger area increases typical travel distance and completion time; A2C and DDPG grow by roughly an order of magnitude over this range, whereas the L4V curve grows more slowly. Panel (a) reports the corresponding average collection rate and preserves the same empirical ordering.

\subsection{Impact of Noise Power}\label{sec-6d}
Fig.~\ref{fig-area}(c, d) varies the deterministic noise-power parameter $\sigma^2$ from $0.05$ to $0.25$. Higher noise lowers the achievable rate and lengthens every mission. L4V has the lowest median completion time throughout, with a $14\%$--$41\%$ reduction relative to the strongest comparator at each level. These sweeps do not sample channel fading, so they test sensitivity to link quality rather than stochastic robustness; the paired stochastic-channel study is reported separately in Section~\ref{sec-6f}.

\subsection{Impact of User Density}\label{sec-6e}
Fig.~\ref{fig-userhunit}(a, b) varies the number of users $K$ from $2$ to $10$. L4V has the lowest median completion time at every tested density, with reductions of $35\%$--$46\%$ relative to the strongest comparator. Its curve grows more slowly than those of the implemented baselines over this range. Panel (a) shows that the average collection rate rises with density for all methods, while preserving the empirical ordering.

\begin{table}[!t]
\centering
\caption{Scenario-specific optimization and execution costs at the default configuration. Neural execution is the warmed median rollout cost scaled to the observed median mission length; GA and MPC report median per-instance solve time. Held-out transfer is evaluated separately in Fig.~\ref{fig:stress_zeroshot}.}
\label{tab:compute}
\resizebox{0.86\linewidth}{!}{
\begin{tabular}{@{}l|c|c@{}}
\hline
Method & Scenario optimization & Execution/solve \\
\hline
\rowcolor{red!8}\textbf{L4V} & \textbf{293~s} & \textbf{53~ms} \\
GA   & -- & 2369~s \\
DQN  & 887~s & 118~ms \\
A2C  & 2475~s & 131~ms \\
DDPG & 3506~s & 111~ms \\
MPC  & -- & 64~s \\
\hline
\end{tabular}}
\end{table}

Table~\ref{tab:compute} separates policy optimization from online execution. For neural policies, we warm the implementation, measure the median rollout cost over $30$ runs, and scale it to the observed median mission length; GA and MPC report per-instance solve time. Once optimized, L4V's $53$-ms execution estimate is approximately $1.2\times10^3$ times lower than the implemented MPC solve and $4.5\times10^4$ times lower than GA. These default-configuration measurements are not an equal-compute comparison. Section~\ref{sec-6j} separately tests whether distributional pretraining makes the forward-only execution mode available on held-out layouts.

\begin{figure*}[!t]
\captionsetup{font={small}, skip=6pt}
    \centering
    \subfigure[Completion time.]
    {
       \includegraphics[width=0.42\linewidth]{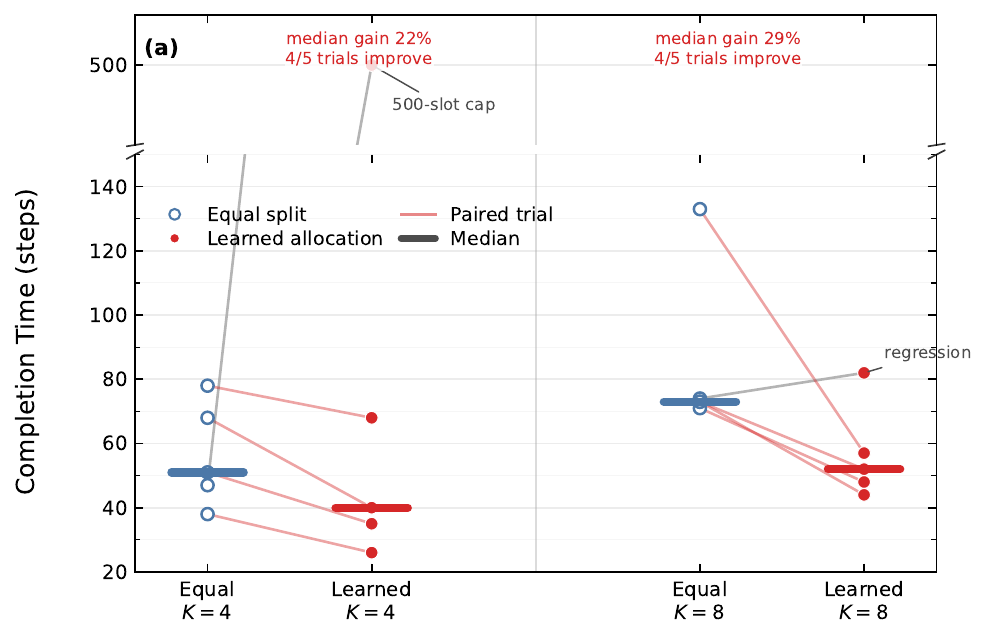}
       \label{fig:alloc_step}
    }
    \hfil
    \subfigure[Total flight energy.]
    {
       \includegraphics[width=0.42\linewidth]{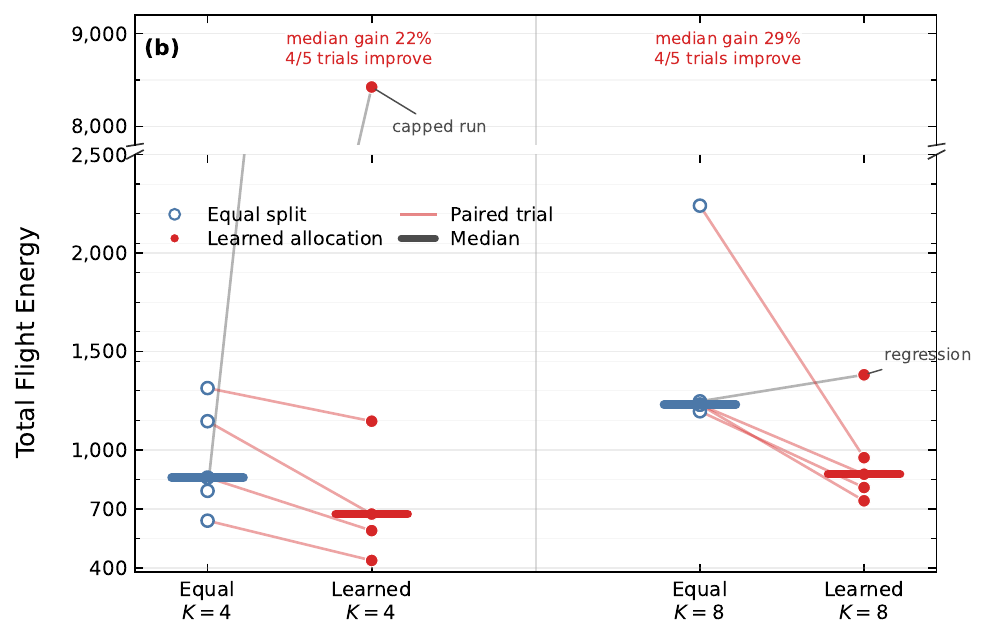}
       \label{fig:alloc_energy}
    }
    \\
    \subfigure[Energy--time relation.]
    {
       \includegraphics[width=0.42\linewidth]{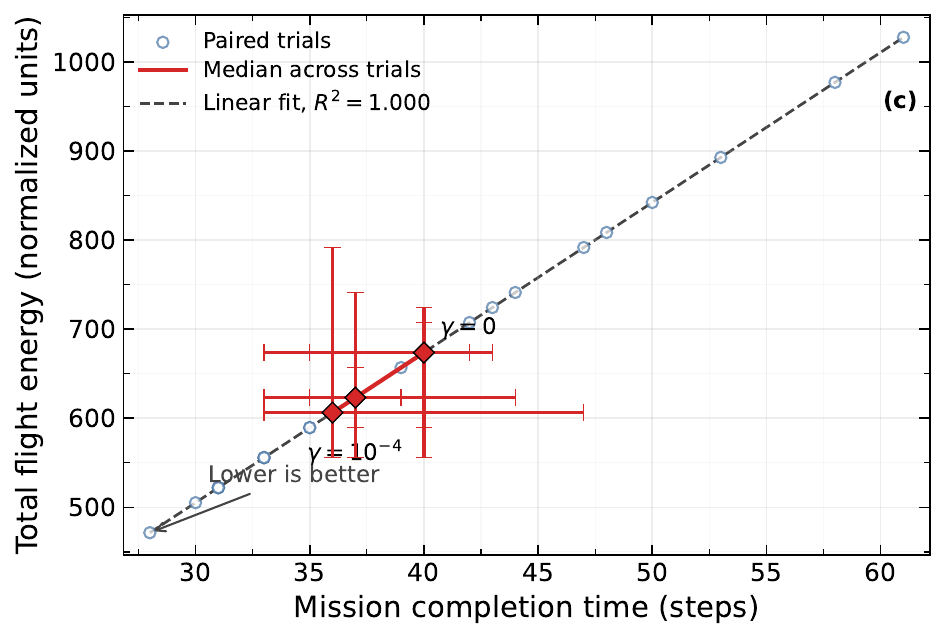}
       \label{fig:energy_pareto}
    }
    \hfil
    \subfigure[Paired change from $\gamma=0$.]
    {
       \includegraphics[width=0.42\linewidth]{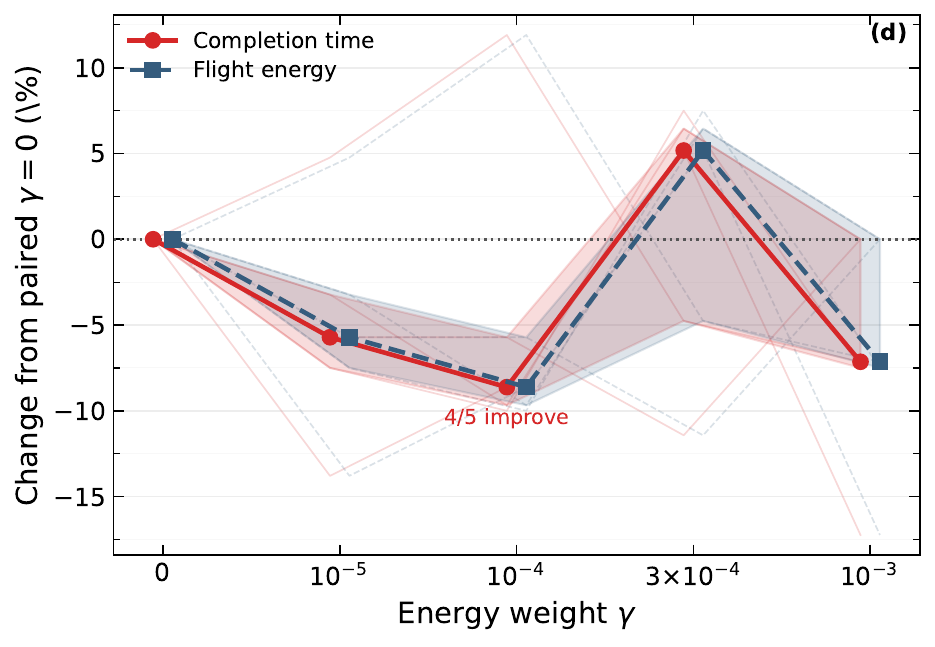}
       \label{fig:energy_normalized}
    }
    \vspace{-6pt}
    \caption{Allocation ablation and energy-weight sensitivity. Panels (a, b) compare learned allocation with equal split over five paired instances; thin segments preserve each pairing and thick horizontal bars mark medians. The broken axes omit empty ranges while retaining and labeling the capped run. Panel (c) shows every energy-sweep trial with median/IQR summaries and a linear diagnostic. Panel (d) reports paired percentage changes from $\gamma=0$ with inter-quartile bands. Lower is better throughout, and capped or regressing trials are retained.}
    \vspace{-10pt}
    \label{fig-alloc}
\end{figure*}

\begin{figure}[!t]
    \centering
    \includegraphics[width=0.78\linewidth]{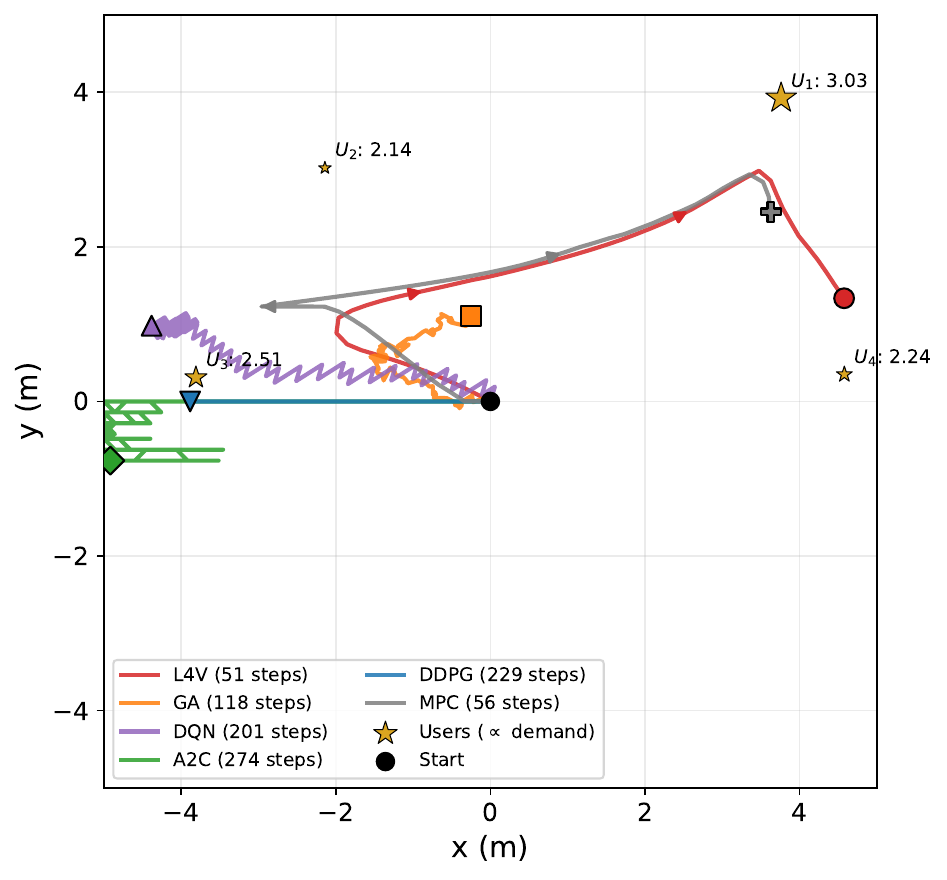}
    \caption{Executed AAV paths on one shared $K=4$ instance; marker size scales with task demand and each path is labeled with its mission length.}
    \label{fig-traj}
\end{figure}

\begin{figure*}[!t]
\captionsetup{font={small}, skip=4pt}
    \centering
    \subfigure[Channel-generator mismatch.]{
       \includegraphics[width=0.42\linewidth]{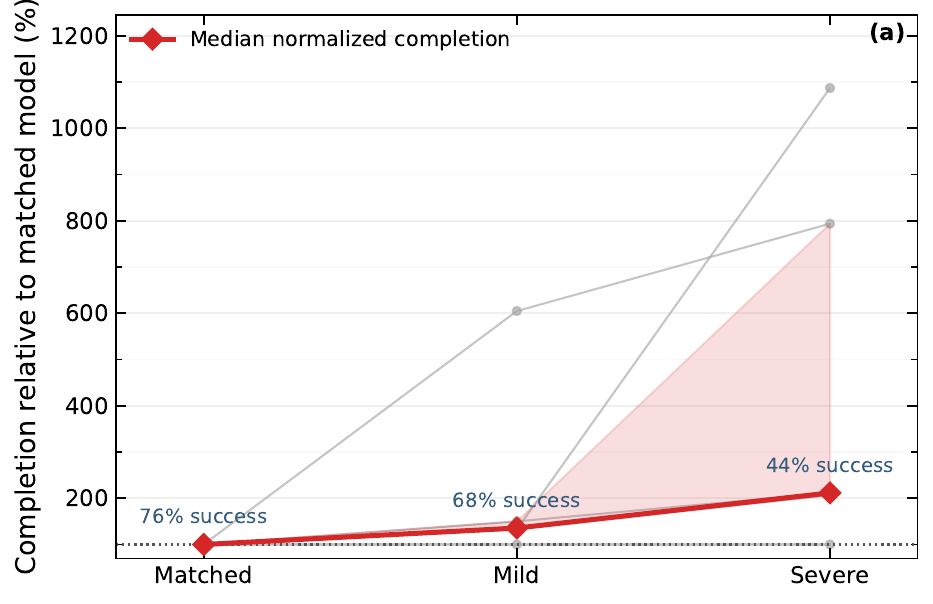}
       \label{fig:stress_mismatch}}
    \hfil
    \subfigure[Partial observations.]{
       \includegraphics[width=0.42\linewidth]{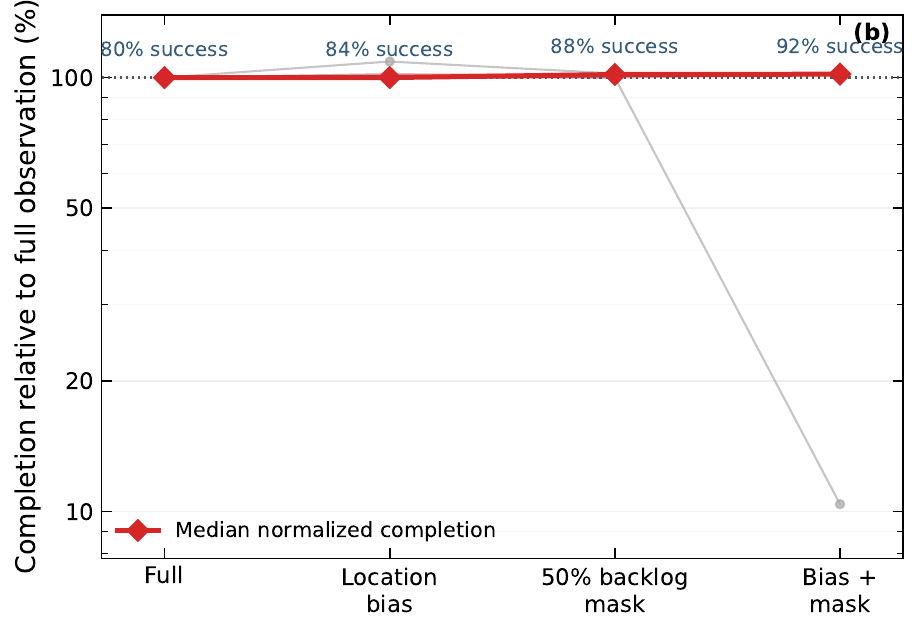}
       \label{fig:stress_partial}}
    \\
    \subfigure[Fixed-resource two-AAV extension.]{
       \includegraphics[width=0.42\linewidth]{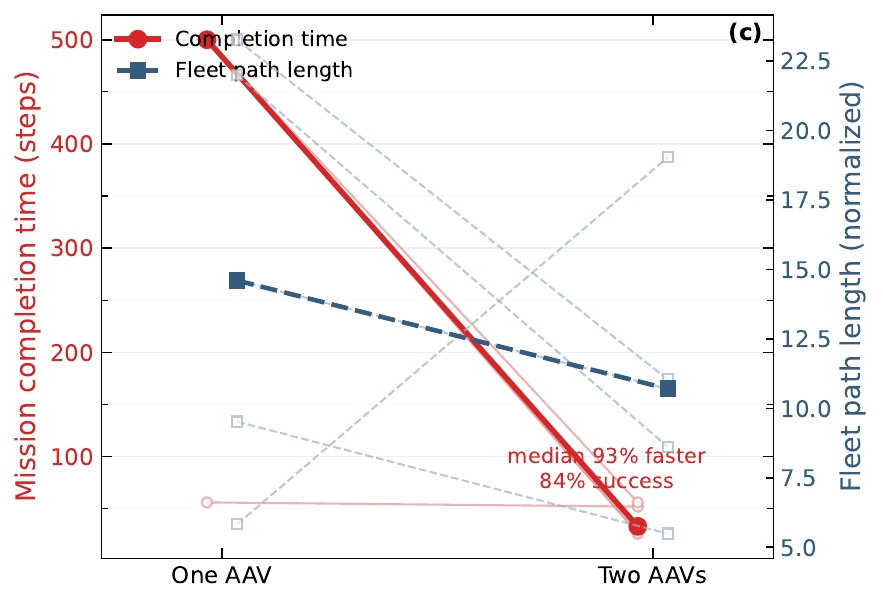}
       \label{fig:stress_multi}}
    \hfil
    \subfigure[Circular no-fly region.]{
       \includegraphics[width=0.42\linewidth]{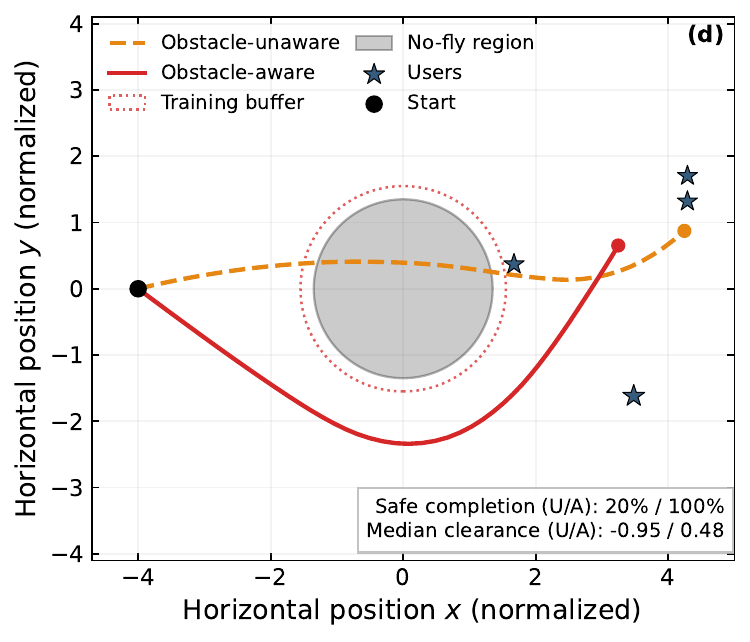}
       \label{fig:stress_obstacle}}
    \\
    \subfigure[Zero-shot held-out layouts.]{
       \includegraphics[width=0.42\linewidth]{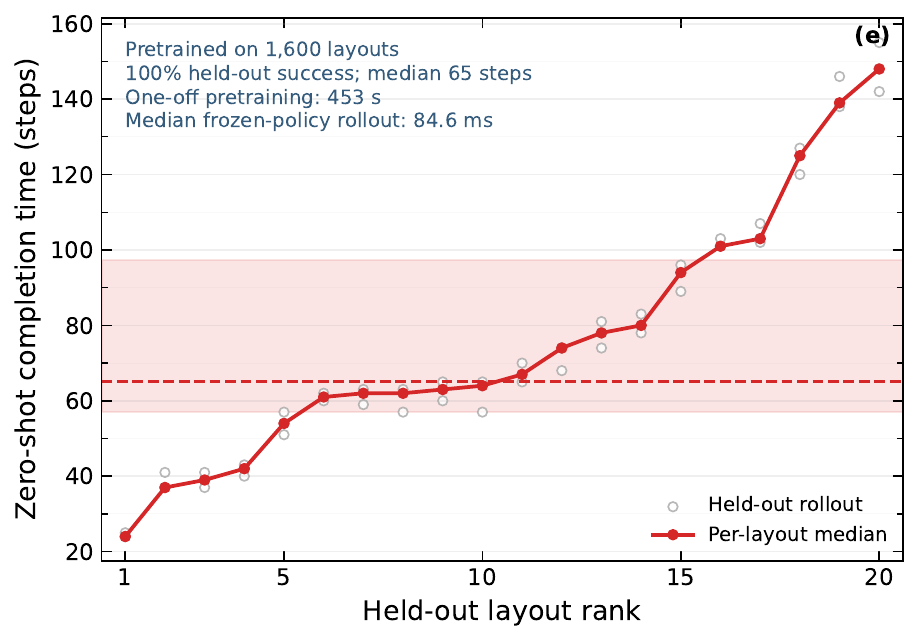}
       \label{fig:stress_zeroshot}}
    \hfil
    \subfigure[Heading-smoothness sensitivity.]{
       \includegraphics[width=0.42\linewidth]{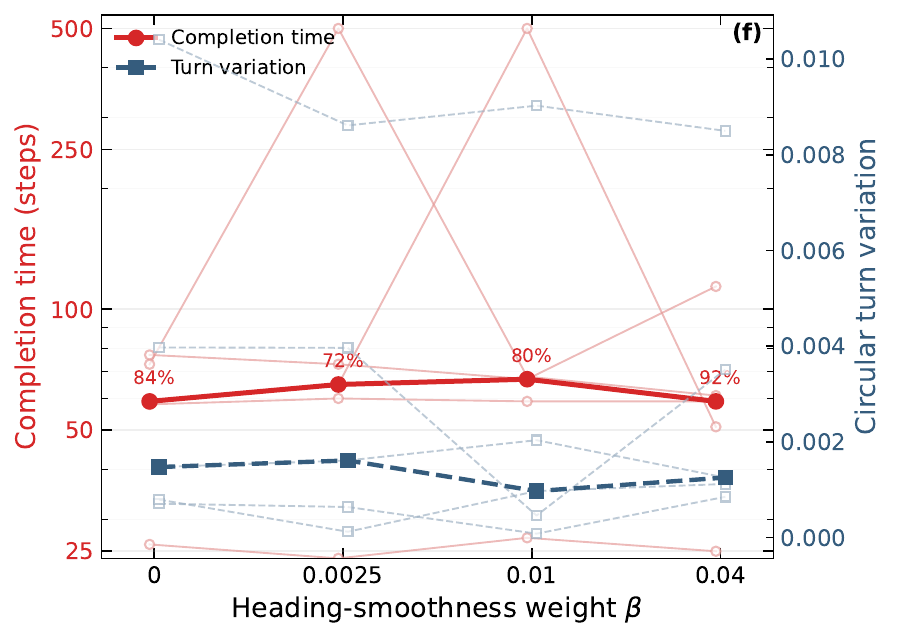}
       \label{fig:stress_smooth}}
    \vspace{-6pt}
    \caption{Reviewer-driven stress tests. Panels (a, b) keep the trained policy fixed and pair the random draws across deployment conditions. The nominal channel is $(\kappa,K_{\rm r},\nu)=(0.15,6,2)$; mild and severe shifts are $(0.25,3,2.2)$ and $(0.40,0,2.5)$. Observation tests add a fixed location bias of standard deviation $0.30$, independently mask each current backlog update with probability $0.50$ while holding its last estimate, or combine both. Panel (c) retains one global radio budget as the fleet grows from one to two AAVs. Panel (d) applies hard collision checks after training with or without obstacle geometry and a soft clearance barrier. Panel (e) freezes one distributionally pretrained policy on disjoint layouts. Panel (f) reports completion and circular turn variation. Thin marks retain trial-level outcomes; thick marks and bands show medians and IQRs; capped runs and success rates are retained. Logarithmic completion-time ordinates in (b) and (f) preserve extreme trials without compressing the median region.}
    \vspace{-8pt}
    \label{fig-stress}
\end{figure*}

\subsection{Impact of Channel Gain}\label{sec-6h}
Fig.~\ref{fig-userhunit}(c, d) varies the unit-distance gain $\eta$ from $0.5$ to $2.5$. Every completion-time curve decreases monotonically. L4V has the lowest median at every tested gain; its reduction relative to the strongest comparator is $64\%$ at $\eta=0.5$ and narrows to $13\%$ at $\eta=2.5$. The average collection rate in panel (c) rises with $\eta$ for every method.
\subsection{Extended Model: Joint Trajectory and Allocation Learning under a Stochastic Channel}\label{sec-6f}
The preceding sweeps hold the equal-split OFDMA channel fixed. We now compare the extended allocation policy with an otherwise identical equal-split policy under reparameterized shadowing and Rician fading. The five trials are paired by user layout, demand, and random seed, so the contrast isolates allocation within this in-distribution stochastic generator. Fig.~\ref{fig-alloc} shows median completion-time reductions of $22\%$ at $K=4$ and $29\%$ at $K=8$, with improvements in four of five pairs at each size. The full paired view also exposes a $K=4$ learned-policy run that reaches the $500$-slot cap and one $K=8$ regression. Thus allocation is useful on most tested instances but is not uniformly beneficial. Because no mismatched generator or partial observation is tested, this ablation supports the value of shared-resource control rather than a broad robustness claim.
\subsection{Completion-Time-Versus-Energy Tradeoff}\label{sec-6g}
Figures~\ref{fig:energy_pareto} and~\ref{fig:energy_normalized} report all five paired $K=4$ instances for $\gamma \in \{0,10^{-5},10^{-4},3\times10^{-4},10^{-3}\}$. At $\gamma=10^{-4}$, four of five instances improve relative to $\gamma=0$, and the medians change from $40$ to $36$ slots and from $674$ to $606$ energy units. However, the raw trials lie almost exactly on one line ($R^2=1.000$), and energy per slot is nearly constant in this normalized low-speed regime. The observed energy reduction is therefore explained primarily by shorter missions rather than an independently resolved speed--energy Pareto frontier. The remaining instance regresses and the trend is not monotonic in $\gamma$. These results demonstrate differentiable energy-objective integration while delimiting what the data support; a calibrated physical-speed regime remains future work.
\subsection{Learned Trajectory Visualization}\label{sec-6i}
Figure~\ref{fig-traj} overlays all methods on one shared instance and annotates user demand, travel direction, and mission length. L4V completes in $51$ slots and MPC in $56$ slots along visually similar tours. DDPG remains near one user for $229$ slots; DQN and A2C oscillate and take $201$ and $274$ slots; GA takes $118$ slots. This example illustrates one source of the aggregate gap, but it is neither representative by construction nor an optimality certificate.

\subsection{Robustness, Structural Extensions, and Zero-Shot Transfer}\label{sec-6j}
Figure~\ref{fig-stress} separates six questions that the preceding parameter sweeps do not answer. For panels (a)--(d) and (f), five scenario seeds are trained independently and evaluated under five paired channel rollouts per condition. Channel mismatch, location bias, and partial backlog updates are applied only at deployment, leaving policy parameters frozen. The two-AAV controller uses joint state information, a differentiable inter-AAV separation penalty, and one global softmax over all AAV--user pairs; hence its total radio budget equals that of the single-AAV case. In the obstacle test, all users lie beyond a circular no-fly region relative to the start. The unaware policy receives no obstacle coordinates, whereas the aware policy receives the circle and a buffered soft clearance barrier during training; evaluation uses a hard collision indicator.

Panels (a) and (b) quantify bounded departures from training information rather than arbitrary robustness. Across $25$ rollouts per condition, the channel-matched, mild-shift, and severe-shift cases have median completion times of $63$, $117$, and the $500$-slot cap, with success rates of $76\%$, $68\%$, and $44\%$, respectively. The monotone degradation shows that L4V remains sensitive to its assumed generator. Under full observation, location bias, $50\%$ backlog-update masking, and their combination, the medians are $57$, $56$, $49$, and $49$ slots and success rates are $80\%$, $84\%$, $88\%$, and $92\%$. The held-last-value backlog estimate is conservative and does not cause degradation in these paired tests, but this result neither covers missing location identities nor establishes general partially observable control. In panel (c), the one- and two-AAV cases attain median completion times of $500$ and $50$ slots and success rates of $40\%$ and $84\%$, respectively, while the two-AAV median total fleet distance is also lower ($10.7$ versus $14.6$ normalized units). This demonstrates that added mobility can help even when the total radio budget is fixed, but it is a centralized structural extension rather than a decentralized multi-agent solution.

Panel (d) uses a $0.20$-unit training buffer around the hard no-fly boundary and a barrier weight of $250$. The obstacle-unaware and obstacle-aware policies have median completion times of $500$ and $46$ slots, safe-completion rates of $20\%$ and $100\%$, collision rates of $80\%$ and $0\%$, and median minimum clearances of $-0.95$ and $0.48$, respectively. The test establishes differentiable avoidance for one known circular obstacle; it does not cover unknown maps, moving obstacles, or non-differentiable collision dynamics.

Panel (e) evaluates a distinct distributional-pretraining protocol. A single 64--64--32 policy is pretrained across $1{,}600$ independently generated layouts and then frozen. Test layouts use $20$ disjoint seeds with three channel rollouts each; no gradient update, fine-tuning, or online trajectory optimization is permitted. The measured one-off pretraining time is $453$ s. All $60$ held-out rollouts complete, with a median of $65$ slots, an inter-quartile range of $57$--$97$ slots, and a median policy-rollout time of $85$ ms. This is zero-shot generalization within the sampled layout and channel distribution, not a claim of arbitrary out-of-distribution robustness. It establishes that large-scale world-model-guided pretraining can amortize trajectory optimization into direct deployment on new layouts; the broad sweeps remain per-layout optimization comparisons.

Panel (f) sweeps $\beta\in\{0,0.0025,0.01,0.04\}$. The median completion times are $60$, $67$, $67$, and $57$ slots, with success rates of $84\%$, $72\%$, $80\%$, and $92\%$; median circular turn variation is $1.43$, $1.31$, $0.97$, and $1.18$ in units of $10^{-3}$. Relative to $\beta=0$, the default $\beta=0.01$ reduces turn variation by $32\%$ but increases the median mission by seven slots, whereas $\beta=0.04$ is faster but less smooth than $\beta=0.01$. The response is nonmonotonic, so $\beta$ remains a task-dependent regularization choice.

\FloatBarrier
\section{Conclusion}\label{sec-7}
This paper has developed L4V, which treats coupled AAV motion, stochastic radio propagation, shared allocation, and residual demand as a structured differentiable wireless world model and converts its derivatives into a reusable policy. It has linked a cumulative-backlog surrogate to open- and closed-loop adjoints, scoped the resulting gradient and convergence statements, and tested bounded departures from the training model. Across the implemented baselines, L4V reduces median mission time by up to $65\%$, while one policy pretrained on $1{,}600$ layouts completes all $60$ held-out frozen-policy rollouts; paired stress tests retain the failures under severe channel mismatch rather than implying model-free robustness or global optimality. Future work will address moving users, trace-driven site-specific channels and maps, discrete scheduling and interference, decentralized fleet coordination, calibrated propulsion, and hardware experiments.

\balance
\bibliography{ref}
\bibliographystyle{IEEEtran}
\ifCLASSOPTIONcaptionsoff
  \newpage
\fi
\end{document}